\newcommand{\stitle}[1]{\vspace{1ex} \noindent{\bf #1}}
\newtheorem{theorem}{Theorem}
\newtheorem{definition}{Definition}
\newtheorem{example}{Example}
\newenvironment{proof}{\quad{\it Proof:}}{\hfill $\square$\par}  
\begin{document}

\pagestyle{empty}

\title{Effective and Efficient Conductance-based Community Search at Billion Scale} 	

\author{Longlong Lin, Yue He, Wei Chen, Pingpeng Yuan, Rong-Hua Li, 
 Tao Jia
\IEEEcompsocitemizethanks{\IEEEcompsocthanksitem Longlong Lin, Yue He, Wei Chen, Tao Jia are with the College of
		Computer and Information Science,
		Southwest University, China. Email: longlonglin@swu.edu.cn; yue\_he\_1222@163.com; WeiChen605@126.com; tjia@swu.edu.cn.  Tao Jia is the corresponding author.
		\IEEEcompsocthanksitem  Pingpeng Yuan is with the School of Computer Science and Technology, Huazhong University of Science and Technology, China.
		E-mail: ppyuan@hust.edu.cn
		\IEEEcompsocthanksitem  Rong-Hua Li is with the School of
		Computer Science and Technololgy,
		Beijing Institute of Technology, China. Email: lironghuabit@126.com \protect\\}
	\thanks{Manuscript received XXX, XXX; revised XXX, XXXX.}}

\markboth{Journal of \LaTeX\ Class Files,~Vol.~14, No.~8, August~2021}%
{Shell \MakeLowercase{\textit{et al.}}: A Sample Article Using IEEEtran.cls for IEEE Journals}


\maketitle

\begin{abstract}
Community search is a widely studied semi-supervised graph clustering problem, retrieving a high-quality connected subgraph containing the user-specified query vertex.  However, existing methods primarily focus on cohesiveness within the community but ignore the sparsity outside the community, obtaining sub-par results. Inspired by this, we adopt the well-known \emph{conductance} metric to measure the quality of a community and introduce a novel problem of \emph{conductance}-based community search (\emph{CCS}). \emph{CCS} aims at finding a subgraph with the smallest \emph{conductance} among all connected subgraphs that contain the query vertex. We prove that the \emph{CCS} problem is NP-hard. To efficiently query \emph{CCS}, a four-stage \emph{subgraph-conductance}-based community search algorithm, \emph{SCCS}, is proposed. Specifically, we first greatly reduce the entire graph using local sampling techniques. Then, a three-stage local optimization strategy is employed to continuously refine the community quality. Namely, we first utilize a seeding strategy to obtain an initial community to enhance its internal cohesiveness. Then, we iteratively add qualified vertices in the expansion stage to guarantee the internal cohesiveness and external sparsity of the community. Finally, we gradually remove unqualified vertices during the verification stage. Extensive experiments on real-world datasets containing one billion-scale graph and synthetic datasets show the effectiveness, efficiency, and scalability of our solutions.

\end{abstract}

\begin{IEEEkeywords}
community search, conductance, and $k$-core. 
\end{IEEEkeywords}
\section{Introduction} \label{sec:introduction}
Community search, identifying a high-quality connected subgraph (known as community) containing the given query vertex, is one of the fundamental tasks in network science \cite{DBLP:conf/focs/AndersenCL06,DBLP:conf/kdd/SozioG10, DBLP:conf/sigmod/CuiXWW14,   DBLP:conf/sigmod/YangXWBZL19, DBLP:conf/kdd/ChenL0XY020}.  The problem also witnesses many practical applications, including social recommendation \cite{DBLP:conf/kdd/SozioG10}, protein complexes identification \cite{DBLP:conf/sigmod/CuiXWW14}, and  impromptu activities organization \cite{DBLP:conf/kdd/ChenL0XY020}.  Therefore, many community search models have been proposed in the literature, notable examples include $k$-core based \cite{DBLP:conf/kdd/SozioG10,DBLP:conf/sigmod/CuiXWW14, DBLP:journals/datamine/BarbieriBGG15} and $k$-truss based  \cite{DBLP:conf/sigmod/HuangCQTY14, DBLP:conf/sigmod/LiuZ0XG20}. 

\begin{figure}[t!]
\vspace*{0.6cm}
\centering
\includegraphics [width=0.4\textwidth] {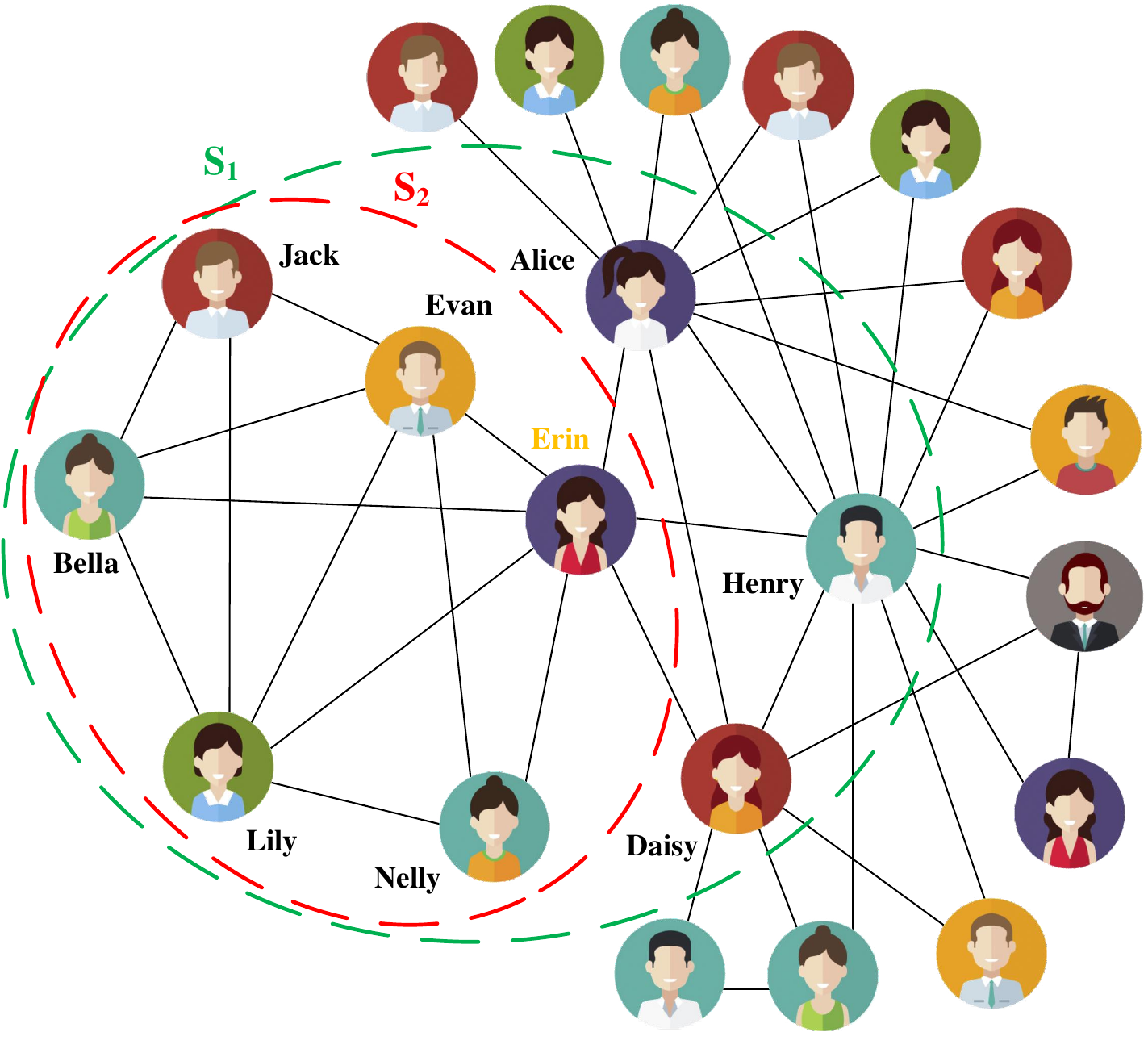}
\caption{Motivation example. $q=Erin$ is the query vertex. $S_{1}$ is the answer returned based on $k$-core \cite{DBLP:conf/sigmod/CuiXWW14} and $k$-truss  \cite{DBLP:conf/sigmod/HuangCQTY14}. $S_{2}$ is our answer.} 
\label{fig:motivation} \vspace*{-0.6cm}
\end{figure}

Despite their success, most existing state-of-the-art (SOTA) methods are based on  cohesive subgraph models, which \emph{only} focus on the internal cohesiveness  of the output community \cite{DBLP:journals/vldb/FangHQZZCL20}. Taking Figure \ref{fig:motivation} as an example, the query vertex $Erin$ plans to organize a seminar, $S_{1} = \{Erin,Evan,Jack,Bella,Lily,Nelly,Alice,Henry,Daisy\}$ is the $3$-core community containing $Erin$ (i.e., each vertex has at least 3 neighbors in the community. Note that
4-core containing $Erin$ does not exist in Figure \ref{fig:motivation}). Similarly, $S_{1}$ is also the $4$-truss community containing $Erin$  (i.e., each edge has at least 4-2 triangles in the community. Note that
5-truss containing $Erin$ does not exist in Figure \ref{fig:motivation}). However, according to \cite{DBLP:conf/www/LeskovecLM10, DBLP:journals/kais/YangL15}, we can know that a community is high-quality if the community is densely connected internally and well separated from the remainder of the graph. Thus, cohesive subgraph-based community search models significantly overlook the unique external separability of communities, potentially obtaining poor results, as stated in our experiments. For example, $S_{1}$ has greater external connections (20 edges) than internal connections (18 edges), indicating that $S_{1}$ is not ideally separated from the rest of the network \cite{DBLP:conf/www/LeskovecLM10, DBLP:journals/kais/YangL15}. On top of that, the qualities of communities returned by $k$-core or $k$-truss are seriously compromised by the input parameter $k$. For example, to obtain smaller communities that better align the size (usually between 3 to 100 in Table \ref{tab:range}) of the ground truths,  they tend to maximize $k$. However, since the low core/truss numbers of most vertices in real-world networks \cite{DBLP:journals/kais/ShinEF18}, the maximum $k$ is still naturally small. As a result, the resultant community (returned by $k$-core or $k$-truss) sizes remain significantly larger than those of real communities (Table \ref{tab:size}), failing to meet the properties of real-world networks \cite{DBLP:conf/www/LeskovecLDM08,DBLP:journals/tbd/WangYBH24}. Last but not least, existing solutions incur substantial computational costs for handling massive graphs with billions of edges. For example, on a Friendster with 1.8 billion edges, \textit{CSM}, \textit{TCP} and \textit{HK-Relax} cannot obtain results within five hours in our experiments (Table \ref{tab:running}), which is obviously impractical for graph exploration. Summing up, it is meaningful and challenging to improve the efficiency and quality of existing community search methods.


Intuitively, $S_{2}=\{Erin,Evan,Jack,Bella,Lily,Nelly\}$ may be an ideal community in
Figure \ref{fig:motivation}. This is because $S_2$ has 12 dense internal edges and only has 3 sparse external edges. Inspired by this, 
in this paper, we use the representative and \emph{free-parameter} metric \emph{conductance} \cite{DBLP:conf/focs/AndersenCL06, DBLP:conf/icdm/BianYCWLZ18, DBLP:conf/sigmod/YangXWBZL19} to measure the quality of a community. Given an undirected graph $G$ and a vertex set $C$, the \emph{conductance} of $C$ is defined as $\phi(C)=\frac{|E(C,\bar{C})|}{\min\{vol(C), 2m-vol(C)\}}$, in which $|E(C,\bar{C})|$ is the number of edges with one endpoint in $C$ and another not in $C$, $vol(C)$ is the sum of the degree of all vertices in $C$, and $m$ is the number of the edges in $G$. Thus, a smaller $\phi(C)$ implies that  the number of edges going out of $C$ is relatively small compared with the number of edges within $C$. As a consequence, the smaller the $\phi(C)$, the better the quality of the community $C$ \cite{DBLP:conf/www/LeskovecLM10, DBLP:journals/kais/YangL15}, which reflects the cohesion inside the community and the sparseness outside.  For example, $\phi(S_{2})=0.111 < \phi(S_{1})= 0.833$ indicates that $S_{2}$ is more suitable for $Erin$ to organize a seminar.

As stated in Sections \ref{sec:gr} and \ref{sec:relate}, the diffusion-based local clustering is most relevant to ours, which aims to identify a small \emph{conductance}  community near the given query vertex \cite{DBLP:conf/aaai/LinLJ23}. Although the local clustering can obtain theoretically good graph cuts, it is not desirable in community search. For example, the community returned by the local clustering may not be connected \cite{DBLP:conf/icml/ZhuLM13} or contain no the given query vertex \cite{DBLP:conf/webi/LuoWP06}, which contradicts the definition of community search \cite{DBLP:journals/vldb/FangHQZZCL20}. To address the dilemma, we introduce a novel \emph{conductance}-based community search (named as \emph{CCS}), which aims at finding a subgraph with the smallest \emph{conductance} among all connected subgraphs containing the given query vertex.

\stitle{Our solutions.} Addressing the \emph{CCS} problem raises significant challenges due to its NP-hardness (Theorem \ref{thm:nph}), thus we resort to the heuristic  solutions. One possible simple solution is to adjust the existing local clustering framework and propose an efficient heuristic algorithm \emph{PPRCS} to solve our \emph{CCS}.  Specifically, \emph{PPRCS} first is to compute the Personalized PageRank vector seeding by the query vertex $q$. Then, \emph{PPRCS} executes a modified sweep cut over the vector, ensuring the resultant community is connected and contains $q$. However, the community quality of such naive solution \emph{PPRCS}  is heavily dependent on many hard-to-tune parameters, resulting in that its performance is unstable and in most cases very poor \cite{DBLP:conf/icml/ZhuLM13,DBLP:conf/kdd/KlosterG14}. To overcome these limitations, we propose an efficient and effective four-stage \emph{subgraph-condutance}-based algorithm \emph{SCCS}. Specifically, in the first stage, \emph{SCCS} samples a subgraph (the subgraph is much smaller than the initial graph) as the subsequent search space, significantly saving computational costs. In the next three stages, we adopt the expansion and contraction paradigm. Concretely, in the initial community stage (the second stage), we consider the largest clique containing $q$ as the initial community to ensure that the internal connections are sufficiently tight. In the third stage, we iteratively add qualified vertices that increase the community quality (Definition \ref{def:f(S)}) until the termination condition is met, ensuring the community remains tightly connected internally while being sparsely connected externally. Finally, in the verification stage (the fourth stage), we use boundary-based pruning techniques to iteratively remove unqualified vertices, further improving the quality of the community. Note that the last two stages execute interactively. In a nutshell, our main contributions are highlighted as follows:

\begin{itemize} [leftmargin=8pt, topsep=0pt]
\item \textbf{\underline{Novel Model.}} We formulate the problem of \emph{conductance}-based community search to simultaneously consider both cohesion within the community and separation from the outside. This is an improvement over existing methods that mainly focus on internal cohesion. We also prove that the \emph{CCS} problem is NP-hard.

\item \textbf{\underline{Efficient Algorithms.}} We propose a basic algorithm \emph{PPRCS} and an advanced four-stage \emph{subgraph-conductance}-based algorithm \emph{SCCS}. One striking feature of \emph{SCCS} lies in its ability to restrict the search space to a very small subgraph by sampling, thus mitigating prohibitively high time overhead, and the clustering results are more in line with the properties of real-world networks.
	
\item \textbf{\underline{Extensive Experiments.}} Experimental evaluation on six real-world datasets with ground-truth communities and synthetic LFR benchmark datasets shows that our proposed solutions are indeed more efficient, effective, and scalable compared with six competitors. For instance, on the Friendster dataset with 1.8 billion edges, our  \emph{PPRCS} (resp., \emph{SCCS}) takes only about 0.6 (resp., 11) seconds, while most other methods fail to produce results within five hours. Besides,  our solutions have better the \textit{F1-score} when compared with the competitors.  For reproducibility, we release our source codes and datasets at https://github.com/yuehe1222/SCCS.
\end{itemize}

\section{Problem Formulation} \label{sec:pro}

\subsection{Notations}
We use $G(V, E)$ to denote an undirected graph, in which $V$ (resp. $E$) indicates the vertex set (resp. the edge set) of $G$. Let $n=|V|$ and $m=|E|$ be the number of vertices and edges in $G$, respectively. For a vertex subset $S\subseteq V$, we define $G_{S}=(S,E_S)$ as the subgraph induced by $S$ if $E_S= \{(u,v)\in E| u,v \in S\}$, and $\bar{S}=V\setminus S$ as the complement of $S$. We use $E(S, \bar{S})=\{(u,v) \in E|u\in S, v\in \bar{S}\}$ to represent the edges with one endpoint in $S$ and another not in $S$. Let $N_{S}(u)=\{v\in S|(u,v)\in E\}$ be the neighbors of the vertex $u$ in $S$ and $d_{S}(u)=|N_{S}(u)|$ indicates the degree of $u$ in $S$.  When the context is clear, we abuse $N(u)$, $d(u)$, and $S$ to represent $N_{V}(u)$, $d_{V}(u)$, and $G_S$, respectively. Let $vol(S)=\sum_{u \in S}d(u)$ be the sum of the degree of all vertices in $S$. Table \ref{tab:notaion} summarizes the  frequently-used symbols.

A cluster is a vertex set $S \subseteq V$. According to \cite{DBLP:conf/www/LeskovecLM10,DBLP:journals/kais/YangL15}, we know that a cluster $S$ is good if the cluster is densely connected internally and well separated from the remainder of $G$. Therefore, we use a representative and free-parameter metric \emph{conductance} \cite{DBLP:conf/icdm/BianYCWLZ18, DBLP:conf/sigmod/YangXWBZL19, DBLP:conf/focs/AndersenCL06, DBLP:conf/kdd/KlosterG14,DBLP:conf/kdd/Lin0WZL24} to measure the quality of a cluster $S$.

\begin{definition} [Conductance $\phi$] \label{def:co}
	Given an undirected graph $G(V,E)$ and a vertex subset $S \subseteq V$, the conductance of $S$ is defined as	$\phi(S)=\frac{|E(S,\bar{S})|}{\min\{vol(S), 2m-vol(S)\}}$. 
\end{definition}

By Definition \ref{def:co}, we have $\phi(S)=\phi(\bar{S})$. Figure \ref{fig:problem_statement} illustrates the concept of \textit{conductance} on a simple synthetic graph.  Next, we formally state our problem as follows.

\setcounter{table}{0}
\begin{table}[t!] 
\centering
\caption{The frequently-used symbols} 
\begin{tabular}{c|c}
\hline
Symbol &Definition\\ 
\hline
$G$ & an undirected graph\\
\hline
$S_G$ &a sampling subgraph induced by $G$\\
\hline
$n,m$ & the number of vertices and edges in $G$\\
\hline		
$E(S, \bar{S})$  & an edge set with only one endpoint in $S$\\ 
\hline
$vol(S)$  & the sum of the degree of all vertices in $S$\\ 
\hline
$N_S(u), d_S(u)$ & the neighbors and degree of $u$ in $S$\\
\hline
$d_{G}(u, q)$ & the depth from $q$ to $u$\\ 
\hline
$\alpha ,r_{max}$ & the jump probability and termination threshold\\ 
\hline
$dp$ &the sampling depth\\
\hline
$h, l$ & the upper and lower bounds of sampled vertices\\  
\hline
$count$ &  the upper limit of one expansion\\
\hline
$A_i$ & the adjacency set of size $i$\\
\hline
$B_j$ & the boundary set of size $j$\\
\hline
$\Delta f(S)$ & the quality gain of $S$\\
\hline
\end{tabular}
\label{tab:notaion}
\end{table}

\stitle{Problem Statement (\emph{CCS}).} Given an undirected  graph $G(V,E)$ and a query vertex $q \in V$, the \underline{C}onductance-based \underline{C}ommunity \underline{S}earch (\emph{CCS}) problem aims to identify a subgraph $S$ satisfying,

\textbf{(1) Seed Inclusion:} $q\in S$; 

\textbf{(2) Connected:} $S$ is a connected subgraph; 

\textbf{(3) High quality:} $\phi(S)$ is minimized among all possible choices that satisfy the above two conditions. Namely, for any $S'\subseteq V$ and $q \in S'$, we have $\phi(S)\leq \phi(S')$.

The \textit{CCS} problem aims to find a connected subgraph $S$ containing the query vertex $q$ that is internally condensed and externally sparse. We use \textit{conductance} to measure cohesion and exosparcness of $S$, that is, the smaller the \textit{conductance} of $S$, the higher the quality.

\begin{example}
Consider Figure \ref{fig:problem_statement} and the query vertex is $v_0$ or $v_7$. According to the problem definition of our \emph{CCS}, $S_1=\{v_0, v_1, v_2, v_3\}$ denotes the optimal community containing $q=v_0$ with the lowest \emph{conductance} of $\phi^*=1/11$. Similarly, $S_2=\{v_7, v_8, v_9,v_{10}\}$ represents the optimal cluster containing $q=v_7$ with the lowest \emph{conductance} of $\phi^*=1/13$.
\end{example}

\stitle{Remark.} Our proposed \emph{CCS} and \cite{DBLP:journals/eswa/HeLYLJW24} both use conductance as the objective function to identify the community to which the query node belongs; however, there are several important differences between them, which are summarized as follows. (1) \cite{DBLP:journals/eswa/HeLYLJW24} is the first to use \textit{conductance} to investigate the problem of community search with size constraints, introducing a new model for \textit{conductance}-based community search with size constraints (named as \textit{CCSS}). In contrast, our manuscript explores the problem of \textit{conductance}-based community search without size constraints (named as \textit{CCS}), making it more flexible. For example, \textit{CCSS} has two size parameters, which
makes choosing the proper parameters a challenge for users. In particular, if the parameter settings are too strict, it may obtain empty results. On the other hand, if the parameters are too loose, the quality of the results may not be excellent. We emphasize that community search with size constraints and community search without size constraints represent two orthogonal directions that have been extensively studied by numerous literature (for more details, see the related work in [1]). (2) The algorithms designed by \textit{CCSS} are based on a two-level loop, which includes two key technologies: the vertex scoring function and the perturbation strategy. The vertex scoring function is to obtain coarse-grained candidate communities, while the perturbation strategy aims to refine the quality of previous candidate communities. In our manuscript, we propose a basic algorithm based on Personalized PageRank and an advanced four-stage \textit{subgraph-conductance}-based algorithm \textit{SCCS}.  The standout feature of \textit{SCCS} lies in its ability to restrict the search space to a very small subgraph through sampling, followed by optimizing the community using alternating expansion and contraction strategies.

\subsection{Problem Hardness}

\begin{theorem}\label{thm:nph}
Given an undirected graph $G(V, E)$ and a query vertex $q \in V$, the \emph{CCS} problem is NP-hard.
\end{theorem}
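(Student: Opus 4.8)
The plan is to prove NP-hardness by reducing from the classical minimum-conductance problem (equivalently, minimum normalized cut), which asks for the graph conductance $\phi(G)=\min_{\emptyset\neq S\subsetneq V}\phi(S)$ and is known to be NP-hard; on $d$-regular instances this coincides with the NP-hard sparsest-cut / edge-expansion problem, which can be used to anchor the hardness if a self-contained argument is wanted. Concretely, I would show that a polynomial-time algorithm for \emph{CCS} could be invoked as a subroutine to compute $\phi(G)$, so that \emph{CCS} inherits the hardness. Writing $\psi(q)$ for the optimal value returned by \emph{CCS} on seed $q$ --- that is, the smallest $\phi(S)$ over all connected $S$ with $q\in S$ --- the target identity is $\min_{q\in V}\psi(q)=\phi(G)$, after which $n$ invocations of a \emph{CCS} oracle recover $\phi(G)$.

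The first technical step reconciles the connectivity requirement of \emph{CCS} with the unconstrained minimum-conductance problem. Since $\phi(S)=\phi(\bar S)$ and $vol(S)+vol(\bar S)=2m$, some optimal set may be taken on the ``smaller side'' with $vol(S)\le m$, on which the denominator in Definition~\ref{def:co} is simply $vol(S)$. If such an optimal $S$ is disconnected, say $S=S_1\sqcup S_2$ with no edges between the parts, then $|E(S,\bar S)|=|E(S_1,\overline{S_1})|+|E(S_2,\overline{S_2})|$ and $vol(S)=vol(S_1)+vol(S_2)$, so $\phi(S)$ is a mediant of $\phi(S_1)$ and $\phi(S_2)$ and hence at least one component has conductance no larger than $\phi(S)$. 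Iterating this contraction yields a \emph{connected} minimizer of the same conductance value, establishing that restricting to connected subgraphs does not change the optimum.

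With a connected optimal set $S^{\ast}$ in hand, the seed constraint is handled by enumeration. For any $q\in V$ we trivially have $\psi(q)\ge\phi(G)$, since $\psi(q)$ is the conductance of a particular feasible set and $\phi(G)$ is the global minimum. For the reverse direction, pick any $q^{\ast}\in S^{\ast}$; because $S^{\ast}$ is connected and contains $q^{\ast}$, it is feasible for \emph{CCS} with seed $q^{\ast}$, whence $\psi(q^{\ast})\le\phi(S^{\ast})=\phi(G)$. Combining the two bounds gives $\min_{q\in V}\psi(q)=\phi(G)$. At the level of decision problems, deciding $\phi(G)\le\theta$ therefore reduces to answering ``does there exist a connected $S\ni q$ with $\phi(S)\le\theta$?'' over all $q\in V$; since the former is NP-hard, so is \emph{CCS} (membership of the threshold version in NP is immediate, as connectivity and $\phi(S)\le\theta$ are checkable in polynomial time).

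I expect the main obstacle to be the connectivity step rather than the reduction bookkeeping: the unconstrained minimum-conductance optimum may legitimately be disconnected, and the contraction argument must be made airtight, including the boundary case where the minimizer sits on the larger side (handled by passing to the complement) and the degenerate values $vol(S)\in\{0,2m\}$. A secondary point worth flagging is that enumerating over seeds makes this a Turing (Cook) reduction; if a single-instance Karp reduction is preferred, one would instead embed a distinguished seed into the constructed graph so that it provably lies in an optimal community --- which the connectivity observation above makes possible, but which requires an explicit gadget to force the seed into the minimizer without perturbing the cut and volume that define $\phi$.
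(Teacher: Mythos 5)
Your proposal is correct in its essentials, but it takes a genuinely different route from the paper. The paper argues NP-hardness by a reduction from $k$-clique: it claims that a decision instance of \emph{CCS} whose optimum has the form $|E(S,\bar S)|/(k(k-1)+|E(S,\bar S)|)$ corresponds exactly to the existence of a $k$-clique containing $q$ (the conductance is minimized when $\sum_{u\in S}d_S(u)$ is maximized at $k(k-1)$), so that a \emph{CCS} solver would decide $k$-clique. You instead give a Turing reduction from the global minimum-conductance / sparsest-cut problem: the mediant inequality applied to the components of a disconnected minimizer on the small-volume side shows the unconstrained optimum is attained by a connected set, and then $\phi(G)=\min_{q\in V}\psi(q)$ is recovered with $n$ oracle calls. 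Your route is more careful on exactly the two points where the paper's argument is thinnest --- the connectivity requirement and the seed constraint are both discharged explicitly, and hardness is inherited from a standard result rather than from the delicate claim that the conductance minimizer through $q$ must be a clique of a prescribed size. What it costs you is that the reduction is Cook rather than Karp (you flag this yourself; it still suffices for NP-hardness of the optimization problem, but a single-instance reduction would require the seed gadget you only sketch), and the argument leans on the known NP-hardness of computing $\phi(G)$, for which you should supply a concrete citation (e.g., the NP-completeness of optimizing conductance as a graph cluster measure) rather than an appeal to folklore.
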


\begin{proof}
Given an undirected graph $G$  and an integer $k$, if there exists a subgraph of size $k$ in $G$ such that any two vertices are connected to each other. This is called the $k$-clique problem and has been proven to be NP-complete \cite{DBLP:books/fm/GareyJ79}. Our goal is to prove the NP-hardness of \emph{CCS} problem by reducing a known NP-complete problem (in particular the $k$-clique problem) to our \emph{CCS} problem.

$\Rightarrow: $ Let $S$ be a $k$-clique subgraph in graph $G$ and $vol(S)<|E(G)|$. We construct an instance of the decision \emph{CCS} problem, satisfying the following conditions: (1) $q \in S$, (2) $S$ is connected, and (3) the minimum \emph{conductance} $\phi(S)=|E(S,\bar{S})|/((k(k-1)+|E(S,\bar{S})|)$. Thus, the solution to the \emph{CCS} problem includes a $k$-clique of size $k$.

$\Leftarrow: $ Assume that we have an instance of the decision \emph{CCS}  problem satisfying the following conditions: (1) $q \in S$, (2) $S$ is connected, and (3) the minimum \emph{conductance} $\phi(S)=|E(S,\bar{S})|/vol(S)$, where $vol(S)=\sum_{u \in S}d_{S}(u)+|E(S,\bar{S})|$. And because $\phi(S)= 1/(1+\sum_{u \in S}d_{S}(u)/|E(S,\bar{S})|)$, when $|E(S,\bar{S})|$ is minimized and $\sum_{u \in S}d_{S}(u)$ is maximized, i.e., $\sum_{u \in S}d_{S}(u)=k(k-1)$, the minimum \emph{conductance} $\phi(S)=|E(S,\bar{S})|/(k(k-1)+|E(S,\bar{S})|)$. Obviously, $S$ is an instance of the $k$-clique problem containing $k$ vertices.

Thus, $S\subseteq G$ is a $k$-clique iff $S$ is the solution to the decision \emph{CCS} problem. This proves that \emph{CCS} is NP-hard.
\end{proof}

Due to the NP-hardness of our problem \emph{CCS}, it is infeasible to identify the exact solution even on small graphs. Thus, we propose efficient heuristic methods for \emph{CCS} in next sections.

\begin{figure}[t]
\centering	\includegraphics[width=0.8\columnwidth]{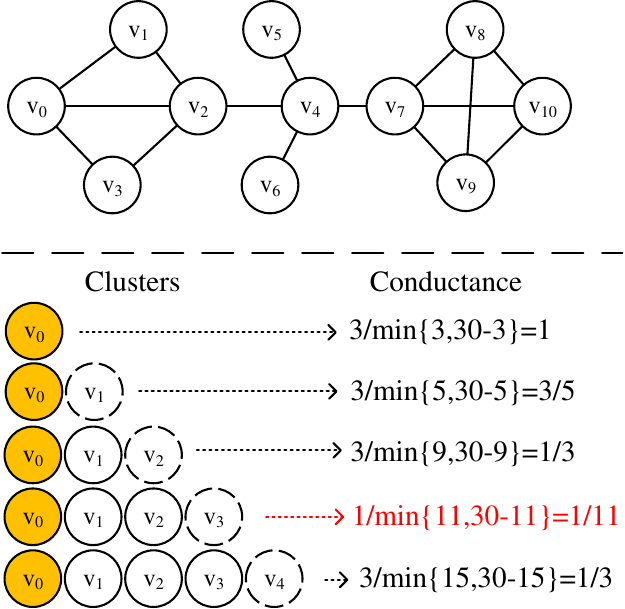} 
\caption{Illustration of conductance for a graph with 11 vertices and 15 edges. The cluster $\{v_0, v_1,v_2,v_3\}$ has the smallest conductance.}
\label{fig:problem_statement} 
\vspace*{-0.6cm}
\end{figure}

\section{A Basic Algorithm for \textit{CCS}} \label{sec:gr}
In this section, we first introduce existing SOTA solutions and conduct an in-depth analysis of their limitations. Then, we devise our basic algorithm \textit{PPRCS} and discuss its potential shortcomings.

\begin{table*}[t!]
\centering
\caption{A comparison of SOTA algorithms. $\times$ means that the corresponding method does not satisfy the corresponding condition, whereas $\checkmark$ means that the condition is met. $k$ refers to the parameter $k$ that the method requires to be predetermined.} \vspace{-0.3cm}
\scalebox{1}{
\begin{tabular}{c|c|c|c|c|c}
\toprule
\multicolumn{1}{c|}{Methods} &
\multicolumn{1}{c|}{Cohesion and Exosparcity}&
\multicolumn{1}{c|}{$k$}&
\multicolumn{1}{c|}{Seed Inclusion}& \multicolumn{1}{c|}{Connectivity Guarantee} & \multicolumn{1}{c}{Objective Function}\\
\midrule
\textit{CSM} \cite{DBLP:conf/sigmod/CuiXWW14,csm1} & $\times$ & $\checkmark$&$\checkmark$ & $\checkmark$ & Cohesive Subgraph-based   \\
\textit{TCP} \cite{DBLP:conf/sigmod/HuangCQTY14,truss1,truss2} &$\times$ & $\checkmark$ &$\checkmark$ & $\checkmark$ & Cohesive Subgraph-based\\
\midrule
\textit{SM} \cite{DBLP:conf/webi/LuoWP06} & $\checkmark$& $\times$ &$\times$ &  $\checkmark$ & Modularity-based \\
\midrule
\textit{PPR\&NIBBLE} \cite{DBLP:conf/focs/AndersenCL06,ppr1} & $\checkmark$& $\times$ &$\times$ &  $\times$ & Conductance-based\\
\textit{HK\_Relax} \cite{DBLP:conf/kdd/KlosterG14,hk1,hk2} & $\checkmark$& $\times$ &$\times$ &  $\times$ & Conductance-based\\
\textit{PCon\_de} \cite{DBLP:conf/aaai/LinLJ23} &  $\checkmark$ &$\times$ &$\times$ & $\times$ & Conductance-based \\
\textit{ASC} \cite{trevisan2017lecture} &  $\checkmark$ &$\times$&$\times$ & $\times$ & Conductance-based \\
\midrule
\textit{PPRCS} (this paper)& $\checkmark$& $\times$ &$\checkmark$ & $\checkmark$ & Conductance-based\\ 
\textit{SCCS} (this paper)& $\checkmark$& $\times$ &$\checkmark$ & $\checkmark$ & Conductance-based \\
\bottomrule	
\end{tabular}} 
\label{tab:comparison}
\end{table*}


\subsection{Existing SOTA Solutions} \label{subsec:sota}
Existing solutions typically undergo three stages. Initially, the score function for each vertex is computed. Subsequently, vertices are iteratively added or removed based on their scores. Finally, a locally optimal cluster is generated during the above iteration. For instance, the \textit{CSM} model is proposed  to maximize the minimum degree \cite{DBLP:conf/sigmod/CuiXWW14,csm1}. At each iteration, the vertex $v$ with the maximum $d_{C\cup \{v\}} (v)$ ($C$ is the vertex set that has been visited so far) is chosen to be added. The \textit{TCP} model is designed to identify more tightly connected clusters in conjunction with the $k$-truss model \cite{DBLP:conf/sigmod/HuangCQTY14,truss1,truss2}. For every vertex $v\in V$, its neighbor vertex $u\in N(v)$ is  arranged based on the edge trussness $\tau (e(u, v))$.  Luo et al. \cite{DBLP:conf/webi/LuoWP06} introduced the subgraph modularity $\textit{SM}$ model as $M=ind(C)/outd(C)$, where $ind(C)$ is the number of edges within $C$ and $outd(C)$ is the number of edges from $C$ connecting to $\bar{C}$. They encompassed both the addition and deletion phases. During the addition phase, vertices that maximize $M$ are sequentially included, whereas in the deletion phase, vertices that weaken $M$ are sequentially eliminated. Andersen et al. \cite{DBLP:conf/focs/AndersenCL06} devised the \textit{PPR\&NIBBLE} model that uses Personalized PageRank (PPR) to sort vertices and then executed a sweep cut procedure to obtain the local optimal conductance. The \textit{HK-Relax} is a diffuse model based on the score derived from the heat kernel random walk \cite{DBLP:conf/kdd/KlosterG14,hk1,hk2}. Lin et al. \cite{DBLP:conf/aaai/LinLJ23} proposed the \textit{PCon\_de} model that sorts vertices in ascending order of degree ratio and sequentially removed those with the smallest degree ratio. Trecisan et al. \cite{trevisan2017lecture} proposed the \textit{ASC} model to identify a cluster by approximating the eigenvectors associated with the second smallest eigenvalue $\lambda_2$ of the corresponding graph Laplacian matrix.


Unfortunately, though these solutions effectively tackle their respective challenges, they still exhibit several limitations in practical application (Table \ref{tab:comparison}). Firstly, cohesive subgraph-based clustering algorithms (e.g., \textit{CSM} \cite{DBLP:conf/sigmod/CuiXWW14}, \textit{TCP} \cite{DBLP:conf/sigmod/HuangCQTY14}) predominantly emphasize intra-community cohesion at the expense of neglecting their separation from the external network. On top of that, for large and intricate graphs, these algorithms demand significant computational costs to yield qualified solutions, as stated in our empirical results (Section \ref{sec:experiments}). Secondly, for the modularity-based clustering algorithms (e.g., \textit{SM} \cite{DBLP:conf/webi/LuoWP06}) and conductance-based clustering algorithms (e.g., \textit{PPR\&NIBBLE} \cite{DBLP:conf/focs/AndersenCL06},
\textit{HK\_Relax} \cite{DBLP:conf/kdd/KlosterG14},
\textit{PCon\_de} \cite{DBLP:conf/aaai/LinLJ23},
\textit{ASC} \cite{trevisan2017lecture}), they cannot ensure that resultant clusters maintain connectivity or include the query vertex $q$ \cite{DBLP:journals/kbs/DingZY18,DBLP:conf/icml/ZhuLM13}, rendering them unsuitable for our proposed \textit{CCS} problem. Moreover, from a practical standpoint, they may yield communities that are much larger than the actual community size, as stated in our empirical results (Section \ref{sec:experiments}).

\begin{algorithm}[t] 
	\caption{\textit{ PPRCS ($G$, $q$, $\alpha$, $r_{max}$)}}  \label{algor:CSPPR}
	\begin{flushleft}
		\hspace*{0.02in} {\bf Input:} 
		graph $G(V,E)$; query vertex $q$; two parameters $\alpha$ and $r_{max}$\\
		\hspace*{0.02in} {\bf Output:} a  community $S$
	\end{flushleft}
	\begin{algorithmic}[1]
\State $\hat{\pi}$ $\leftarrow$  \Call{Forward\_Push}{$G,q,\alpha,r_{max}$}
		\State $\hat{\pi}(q) \leftarrow \infty$; $y \leftarrow \hat{\pi} D^{-1}$  
\State $y_i \leftarrow $ to be index of $y$ with $i$th largest value
\State $S \leftarrow \arg\min \phi(S_i)$ with $S_i=\{y_1,y_2,...y_i\}$ and $S_i$ is connected
\State \textbf{return} $S$
		\Function{Forward\_Push}{$G,q,\alpha,r_{max}$}
		\State $\hat{\pi}(t) \leftarrow 0$ for all $t \in V$ 
		\State $r(q) \leftarrow 1$;  $r(t) \leftarrow 0$ for all $t \neq q$
		\While{$\exists t $ such that $r(t)\geq r_{max} \cdot d_t$} 		
		\For {each $u \in N(t)$} 
		\State $r(u) \leftarrow r(u)+(1-\alpha)r(t)/d_t$ 
		\EndFor
		\State  $\hat{\pi}(t) \leftarrow \hat{\pi}(t) + \alpha r(t)$; $r(t) \leftarrow 0$ 
		\EndWhile
		\State \Return $\hat{\pi}$
				\EndFunction
	\end{algorithmic}
\end{algorithm}

\subsection{Our Basic Algorithm: \textit{PPRCS}} \label{subsec:The CSPPR Algorithm}
To address the \emph{CCS} problem, we introduce an algorithm called \underline{P}ersonalized \underline{P}age\underline{R}ank-based \underline{C}ommunity \underline{S}earch (\emph{PPRCS}) as shown in Algorithm \ref{algor:CSPPR}. The core idea of \emph{PPRCS} is to first compute the proximity of each vertex, and then use a greedy scan to find the community with the minimum \emph{conductance}. During the scan, we ensure that the extracted community remains connected, and set the importance of the query vertex to infinity to guarantee that the community  always contains the query vertex.

Specifically, \emph{PPRCS} first calculates the PPR score $\hat{\pi}(t)$ for each vertex (Lines 1, 6-13). The $\hat{\pi}(t)$ of a vertex $t \in V$ indicates the probability of reaching the vertex $t$ from the query vertex $q$ through a random walk with $\alpha$-decay (Lines 10-12). A higher $\alpha$ implies a greater likelihood for a vertex to remain in the neighborhood of the query vertex, whereas a smaller $\alpha$ means the vertex is more influenced by its neighbors. Thus, the $\hat{\pi}(t)$ of $t$ relative to $q$ inherently serves as a measure of proximity between $q$ and $t$. Then, by sorting these vertices based on their $\hat{\pi}(t)$ values into the vector $y$ (Lines 2-3), we prioritize processing the top-ranked vertices. Finally, we select the first connected subgraph with the minimum \textit{conductance} as target community $S$ (Lines 4-5).

\begin{example}
Reconsider Figure \ref{fig:problem_statement}, we assume that the query vertex $q=v_7$, $\alpha = 0.10$ and $r_{max} = 0.0001$. By executing the Forward\_Push (i.e., Lines 1, 6-13 of Algorithm \ref{algor:CSPPR}), we obtain the PPR vector $\hat{\pi}$ is as follows: $\{v_7:0.9028, v_8:0.0241, v_9:0.0241, v_{10}:0.0240, v_4:0.0224, v_2:0.0005, v_5:0.0005, v_6:0.0005, v_0:0, v_1:0, v_3:0\}$. The descending order of $y_{i}$ (i.e., Lines 2-3 of Algorithm \ref{algor:CSPPR}) is as follows: $\{v_7: 0.2257, v_{9}: 0.0080, v_8: 0.0080, v_{10}: 0.0080, v_4: 0.0056, v_6: 0.0005, v_5: 0.0005, v_2: 0.0001, v_0:0, v_1:0, v_3:0\}$, then $S = \{v_8, v_9, v_{10}, v_7\}$ is the first scanned connected community with the smallest  $\phi(S) = 0.076$ and we return $S$ as our answer (Lines 4-5 of Algorithm \ref{algor:CSPPR}).
\end{example}

\stitle{Complexity Analysis of \textit{PPRCS}.} In the Forward\_Push function, the number of iterations is influenced by $\alpha$ and $r_{max}$, where $\alpha$ is the jump probability and $r_{max}$ is the error threshold. Hence, the total time complexity of this function is $O(1/(r_{max} \times \alpha))$ \cite{DBLP:conf/focs/AndersenCL06}. In the main function, $\hat{\pi}(q)$ is first set to infinity, which incurs a constant time complexity of $O(1)$. Then, traversing each vertex in $\hat{\pi}$ to compute $y = \hat{\pi} D^{-1}$, where $D$ is the degree matrix, has a time complexity of $O(n)$. Subsequently, sorting $y_{i}$ using the built-in function $sorted()$ takes $O(n \log n)$ time. Lastly, vertices are sequentially added to the set $S$ according to the order of $y_{i}$, and the \textit{conductance} is calculated after each addition. This step also has a time complexity of $O(n)$. Thus, the worst-case time complexity of \emph{PPRCS} is  $O(1/(r_{max}\times \alpha) + n \log n)$. \emph{PPRCS} needs extra $O(n)$ to store $\hat{\pi}$, $r$ and $y$ to obtain the result, thus the worst-case space complexity of \emph{PPRCS} is $O(n+m)$.

\stitle{Limitations of \textit{PPRCS}.} Although \textit{PPRCS} proves effective, its heavy reliance on parameters like $\alpha$ and $r_{max}$ often leads to instability concerning clustering quality and computation time \cite{DBLP:conf/icml/ZhuLM13,DBLP:conf/kdd/KlosterG14}. Moreover, the resultant cluster sizes may not accurately reflect the ground-true community sizes, as stated in our empirical results (Section \ref{sec:experiments}). To overcome these drawbacks, we propose the following algorithm with several optimizations to improve performance.

\section{An Advanced Algorithm for \textit{CCS}} \label{sec:LCS}
In this section, we start by sampling a subgraph $S_{G}$ from the original graph $G$ and then perform three-stage optimization on $S_{G}$. Finally, we present our advanced solution \emph{SCCS}.

\vspace{-0.2cm}
\subsection{Sampling Subgraph $S_{G}$}  \label{subsec:Sampling}

Since performing the search directly in the original large graph can lead to prohibitively high time\&space overheads, sampling-based optimization strategies have been widely studied \cite{DBLP:journals/tkdd/HeSBH19,DBLP:journals/tbd/WangYBH24}. Communities are typically composed of vertices that are close to each other, indicating the existence of short paths between vertices within a community \cite{macqueen2001community}. Inspired by these, limiting the sampling distance helps us focus more on the neighboring vertices closely related to the query vertex and avoid the exploration of regions far from the query vertex, reducing significantly computational costs and improving the scalability of the algorithm.

To address the distance issue, we transform it into the computation of the depth of the tree. Assuming the query vertex $q$ is located in the connected component $C_{q}$ and serves as the root node with a depth of 0. For any vertex $v_{i}$ in the connected component $C_{q}$, it resides at different depths as a child node.

\begin{definition}
[Depth $d_G(v_i, q)$] \label{thm:depth}
Given any vertex $v_i$, the minimum depth from $v_i$ to $q$ is expressed as
\begin{equation}
d_{G}(v_{i},q) = \begin{cases}
\text{depth}(v_{i},q), & \text{if } v_{i}\in C_q \\
inf, & \text{otherwise }
\end{cases}
\end{equation}
\end{definition}

If vertex $v_{i}$ and query vertex $q$ do not belong to the same connected component, the distance between them is defined as $inf$ to reflect their inaccessibility.

We aim to identify a local structure $S_{G}$ composed of vertices $v_{i}$ that satisfy $d_{G}(v_{i},q) \leq dp$. We need to consider two potential scenarios. In networks with low degrees, $S_{G}$ may expand insufficiently, so we continue accessing vertices at depth $dp = dp + 1$ until the lower limit $l$ is met. Conversely, in dense networks, to prevent $S_{G}$ from expanding excessively, we halt the search when the number of vertices in the vertex set $V_{S_{G}}$ reaches the predetermined upper limit $h$. We adopt the idea of breadth-first search in Algorithm \ref{algor:Sampling subgraph}. Specifically, we first initialize $V_{S_{G}}$ and a queue with the query vertex $q$ and its depth of 0 (Line 1). When the queue is not empty, we pop vertex $u$ and its depth $d_{G}(u, q)$ from the queue, traverse the neighbors $v_i\in N_{\bar{S}_{G}}(u)$ of $u$ and update $S_G$ and the queue, where $N_{\bar{S}_{G}}(u)=N(u) - N_{V_{S_{G}}}(u)$ (Lines 2-7). If the number of subgraph vertices reaches the minimum threshold $l$ and the current depth is greater than the maximum depth $dp$, access is interrupted early (Lines 8-9). Otherwise, continue traversing each neighbor $v_i\in N_{\bar{S}_{G}}(u)$ of $u$ until the number of subgraph vertices exceeds the maximum threshold $h$, then the traversal ends (Lines 10-14). Note that the "break" in lines 7 and 14 interrupts the outer loop. Finally, the sampled subgraph $S_G$ consisting of $V_{S_G}$ and $E_{S_G}$ is returned (Lines 15-16).

\begin{algorithm}[t]
\caption{\textit{ Sampling ($G$, $q$, $dp$, $l$, $h$)}}  \label{algor:Sampling subgraph}
\begin{flushleft}
\hspace*{0.02in} {\bf Input:} 
An undirected graph $G(V,E)$; query vertex $q$; depth $dp$; minimum threshold $l$; maximum threshold $h$\\
\hspace*{0.02in} {\bf Output:} 
A sampling subgraph $S_{G}$
\end{flushleft}
\begin{algorithmic}[1]
\State Initialize: $V_{S_{G}} \leftarrow {q}$, queue $\leftarrow [(q, 0)]$
\While{queue}
\State $u, d_{G}(u,q) \leftarrow$ queue.pop()
\If{$|V_{S_{G}}| < l$}
\For{each $v_{i}\in N_{\bar{S}_{G}}(u)$}
\State $V_{S_{G}} \leftarrow V_{S_{G}} \cup \{v_{i}\}$; $E_{S_{G}} \leftarrow E_{S_{G}} \cup {(u, v_{i})}$
\State queue.push$((v_{i}, d_G(u,q) + 1))$
\If{$|V_{S_{G}}| \geq l$ and $d_{G}(u, q) > dp$}
\State break
\EndIf
\EndFor
\EndIf
\For{each $v_{i}\in N_{\bar{S}_{G}}(u)$}
\State $V_{S_{G}} \leftarrow V_{S_{G}} \cup \{v_{i}\}$; $E_{S_{G}} \leftarrow E_{S_{G}} \cup {(u, v_{i})}$
\State queue.push$((v_{i}, d_{G}(u, q) + 1))$
\If{$|V_{S_{G}}|> h$}
\State break
\EndIf
\EndFor
\EndWhile
\State $S_{G} \leftarrow V_{S_{G}}$, $E_{S_{G}}$
\State \textbf{return} $S_{G}$
\end{algorithmic}
\end{algorithm}

\subsection{Three-stage Community Optimization} \label{subsec:Community Extraction}

In this subsection, we extract a community from a sampled subgraph $S_{G}$ with the goal of identifying the connected community $S^{*}$ that contains the query vertex $q$ and has the highest quality. Our optimization process consists of three stages: the initial community stage, the expansion stage, and the verification stage.

\subsubsection{The initial community stage} \label{subsec:initialPhase}

For community search, the choice of initial communities plays a crucial role in the performance of algorithms. This paper proposes a seed expansion method based on the largest clique, in which the largest clique containing the query vertex $q$ is selected as the initial community. The maximum $k$-clique is considered as a potential method for initial community selection, as it can identify the most tightly connected substructures in the network, i.e., the maximum complete subgraph consisting of $k$ vertices. This approach emphasizes the high level of connectivity among vertices within communities and aims to aggregate vertices with similar features or functions. By selecting the maximum $k$-clique as the initial community, it can  preserves the local structure of the network and provides a good starting point with potential community structure for the community search process.

It is worth noting that the initial community consists of all maximal cliques (with vertex count greater than or equal to 4) containing the specified vertex $q$ and only returns maximal cliques that do not be contained by other cliques. For instance, in Figure \ref{fig:max_cliques}, if $q = v_{0}$ and $v_{0}\in 4-clique \subseteq 5-clique\subseteq 6-clique$, the initial community should be $6-clique = \{v_0, v_1, v_2, v_3, v_4, v_5\}$ containing $v_0$. 
However, if $q = v_{5}$ and it belongs to communities $S_{1}$ and $S_{2}$, where $S_{1}$ and $S_{2}$ intersect at $v_{5}$ (i.e, $4-clique \not\subseteq 6-clique$), which shows that both $S_{1}$ and $S_{2}$ are initial communities containing $v_{5}$. In this way, we consider multiple different maximal cliques containing $q$ to ensure that all relevant community structures are taken into account. Thus, we introduce a new strategy for initializing communities, referred to as "$initialCom$" that requires $clique_{i} \not\subseteq clique_{j}$, and $clique_{i} \cap clique_{j} = I$, where $i$ and $j$ are the $i$-th and $j$-th largest cliques in $initialCom$ respectively, $j>i>0$ and $q\in I$. The $initialCom$ strategy is defined as follows:

\begin{align}
initialCom = \{clique \mid & \, clique_i \not\subseteq clique_j, \nonumber\\
& clique_i \cap clique_j = I, \nonumber\\
& \forall j > i > 0, q \in I\}
\end{align}

\begin{figure}[t!]
\centering
\includegraphics [width=0.45\textwidth] {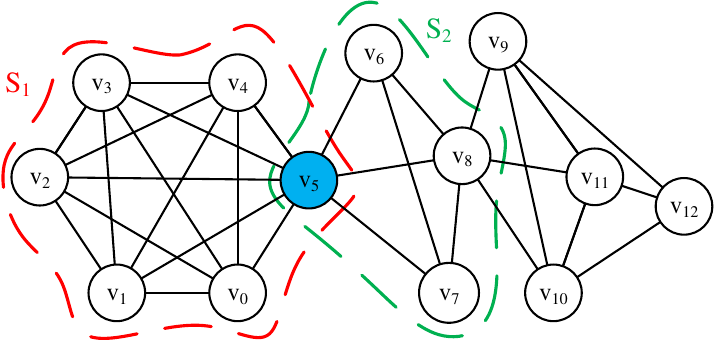}	
\caption{An example of the maximum $k$-clique-communities when $k\geq 4$. $S_1$ and $S_2$ share the vertex $v_5$.} 
\label{fig:max_cliques}
\end{figure}

\subsubsection{The expansion stage} \label{subsec:expansionPhase}

While the maximum $k$-clique ensures tight connections within a community, it focuses solely on local internal connectivity and overlooks the interaction between the community and the entire network. In community search, apart from the tightness of internal structure, it is imperative to take the relationship between the community and the entire network into consideration. Therefore, we intend to expand the community around $S\in initialCom$: Select a vertex that maximizes community quality from the neighbors of $S$ to join the community, and stop expanding the community until there are no neighbors or the joining of neighbors makes the community quality no longer increase. But, adding only one vertex at a time may prematurely lead the expansion process to local optima, yielding a suboptimal result.

To overcome this limitation, we consider adding multiple vertices in Algorithm \ref{algor:expansion}. Initially, we select the vertex $v_{top}\in N_{\bar{S'}}(S')$ that maximizes the community quality and temporarily store it in $A_i$, where $N_{\bar{S'}}(u)=N(u) - N_{S'}(u)$ and $N_{\bar{S'}}(S') = \sum_{u\in S'}{N_{\bar{S'}}(u)}$(Lines 4-5). If adding $v_{top}$ causes the quality of the community $S'$ to decrease, we consider adding the next $v_{top}$ to $S'$ as well (Lines 6-8). If the quality gain $\Delta f(S' \cup \{v_{top}\})_{max}\geq 0$, we update $S$ to $S \cup A_i$ (Line 9). This process is repeated based on this idea until the neighbor set $N_{\bar{S'}}(S')$ is empty or the number of vertices added at one time reaches the specified threshold $count$ without improving the community quality (Line 3). Finally, the expanded community $S$ is returned (Line 10). Due to the observation that $S_G \ll G$, it is often possible to simplify the denominator $\min\{vol(S),2m-vol(S)\}$ of $\phi(S)$ to $vol(S)$. Consequently, the \textit{conductance} $\phi(S)$ can be simplified to the \textit{subgraph conductance} $\phi_{S}(S)$.

\begin{definition}
[Subgraph conductance $\phi_{S}(S)$] \label{thm:phi}
Given any sampled subgraph $S_G\ll G$ and a vertex subset $S\subseteq S_G$, the \textit{conductance} of $S$ is expressed as
\begin{equation} \phi_{S}(S)=\frac{|E(S,\bar{S})|}{vol(S)}
\end{equation}
\end{definition}

\begin{definition}
[Quality score $f(S)$] \label{def:f(S)}
Given any vertex set $S\subseteq S_G$, the quality score of $S$ is expressed as \begin{equation} f(S) = \frac{D_{in}(S,S)}{D_{in}(S,S) + E_{out}(S,\bar{S})}
\end{equation}
\end{definition}

\begin{proof}
Suppose we use $links(,)$ to represent the number of connected edges. By Definition \ref{thm:phi}, we have $\phi_{S}(S)=\frac{|E(S,\bar{S})|}{vol(S)} = 1- \frac{D_{in}(S,S)}{D_{in}(S,S)+E_{out}(S,\bar{S})}$, where $E_{out}(S,\bar{S})=links(S,\bar{S})$ and $D_{in}(S,S) = 2 \times links(S,S)$.
\end{proof}

\begin{algorithm}[t]
\caption{\textit{ Expansion ($S$, $q$, $count$)}}  
\label{algor:expansion}
\begin{flushleft}
\hspace*{0.02in} {\bf Input:} 
An initial community $S$; query vertex $q$; extend the upper limit of the number of vertices at once $count$\\
\hspace*{0.02in} {\bf Output:} 
An extended community $S$
\end{flushleft}
\begin{algorithmic}[1]
\State $S'\leftarrow S$; $f(S) \leftarrow D_{in}(S, S), E_{out}(S, \bar{S})$ 
\State $f(S')\leftarrow f(S); A_i\leftarrow \emptyset; i\leftarrow 0$
\While {$N_{\bar{S'}}(S')\neq \emptyset$ and $count > i$} 
\State find the $v_{top}$ with $\Delta f(S'\cup \{v_{top}\})_{max}$
\State $A_{i+1}\leftarrow A_{i}\cup \{v_{top}\}$; $i\leftarrow i+1$
\If {$\Delta f(S'\cup \{v_{top}\})_{max} < 0$}
\State $S'\leftarrow S'\cup \{v_{top}\}$
\State \textbf{continue}
\EndIf
\State $S\leftarrow S\cup A_{i}$; $i\leftarrow 0$
\EndWhile
\State \textbf{return} $S$
\end{algorithmic}
\end{algorithm}

Obviously, the higher the value of $f(S)$, the higher the quality score of the community $S$. When we add adjacent vertices to $S$, the quality score of $S$ will change. Therefore, we use $\Delta f(S\cup A_{i})$ to represent the gain change in quality after $S$ is expanded to $S\cup A_{i}$.

\begin{definition}
[Quality gain $\Delta f(S\cup A_i)$] \label{thm:gain++}
Given any vertex set $S\subseteq S_G$, the quality gain obtained after expanding $S$ to $S\cup A_i$ is expressed as \begin{equation}
\Delta f(S\cup A_{i}) = f(S\cup A_{i}) - f(S)
\label{eq:gain++}
\end{equation}
where $A_{i}$ represents a set of adjacent vertices of size $i\in [1,|N_{\bar{S}}(S)|]$.
\end{definition}

From Eq. \ref{eq:gain++}, the gain $\Delta f(S\cup A_{i})$ requires recalculating the connections related to $S \cup A_{i}$ each time. To ensure a fast calculation of the changes in quality gain $\Delta f(S\cup A_{i})$ re-induced by $f(S\cup A_i) - f(S)$, we only calculate the connections related to $A_{i}$, which can be calculated in $O(1)$ time. Let $D_{in}(S, A_{i})$ represent the internal volume between $S$ and $A_{i}$:
\begin{equation}
D_{in}(S, A_{i})= 2 \times (links(S, A_{i}) + links(A_{i},A_{i}))
\label{eq:D_in}
\end{equation}

$E_{out}(S, A_{i})$ and $E_{out}(\bar{S}, A_{i})$ represent the connections between $S$ and $A_{i}$ and between $\bar{S}$ and $A_{i}$, respectively 
\begin{flalign}
E_{out}(S, A_{i}) &= links(S, A_{i})
\label{eq:E_out1}
\\
E_{out}(\bar{S}, A_{i}) &= links(\bar{S}, A_{i})\label{eq:E_out2}
\end{flalign}

\begin{theorem}
[Quality of $f(S\cup A_i)$] 
Given any vertex set $S\subseteq S_G$, the quality of $S\cup A_i$ is expressed as
{\scriptsize
\begin{equation}
f(S\cup A_{i}) = \frac{D_{in}(S, S) + D_{in}(S, A_{i})}{D_{in}(S,S) + E_{out}(S,\bar{S}) + D_{in}(S,A_{i}) + E_{out}(A_{i})}
\label{eq:S+A}
\end{equation}
}
where $E_{out}(A_{i}) = E_{out}(\bar{S},A_{i}) - E_{out}(S,A_{i})$.
\end{theorem}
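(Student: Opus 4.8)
The plan is to apply the quality-score definition (Definition \ref{def:f(S)}) directly to the enlarged set $S\cup A_i$ and then re-express every term using only quantities attached to $S$ and to $A_i$, which is precisely what enables the $O(1)$ incremental update claimed for $\Delta f$. Instantiating Definition \ref{def:f(S)} at $S\cup A_i$ gives
\[
f(S\cup A_i)=\frac{D_{in}(S\cup A_i,\,S\cup A_i)}{D_{in}(S\cup A_i,\,S\cup A_i)+E_{out}(S\cup A_i,\,\overline{S\cup A_i})},
\]
so it suffices to show that the numerator equals $D_{in}(S,S)+D_{in}(S,A_i)$ and that the boundary term equals $E_{out}(S,\bar S)+E_{out}(A_i)$. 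Throughout I would exploit the structural fact that $A_i\subseteq N_{\bar S}(S)\subseteq\bar S$, so $S$ and $A_i$ are disjoint and the new complement satisfies $\overline{S\cup A_i}=\bar S\setminus A_i$.

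For the numerator I would partition the edges internal to $S\cup A_i$ according to where their endpoints fall: both in $S$, both in $A_i$, or one in each. By disjointness this yields $links(S\cup A_i,S\cup A_i)=links(S,S)+links(A_i,A_i)+links(S,A_i)$. Doubling (since $D_{in}=2\,links$) and regrouping via Eq. \ref{eq:D_in}, the last two counts combine into $D_{in}(S,A_i)$ and the first into $D_{in}(S,S)$, giving exactly the claimed numerator.

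The crux is the boundary term, and this is where the bookkeeping is most delicate. The edges leaving $S\cup A_i$ are those running from $S$ to $\bar S\setminus A_i$ together with those running from $A_i$ to $\bar S\setminus A_i$. The first group is obtained from the old boundary of $S$ by discarding the edges that have now turned internal, i.e. $E_{out}(S,\bar S)-E_{out}(S,A_i)$, using Eq. \ref{eq:E_out1}. The second group is exactly the set of edges joining $A_i$ to the surviving part of the old complement, namely $E_{out}(\bar S,A_i)$ via Eq. \ref{eq:E_out2}; the subtle point is that $links(\bar S,A_i)$ must be read as the edges from $A_i$ to $\bar S\setminus A_i$, so that the intra-$A_i$ edges already charged to the volume in the numerator are not mistakenly recharged to the cut. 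Summing the two groups yields $E_{out}(S,\bar S)+\bigl(E_{out}(\bar S,A_i)-E_{out}(S,A_i)\bigr)=E_{out}(S,\bar S)+E_{out}(A_i)$.

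Finally I would substitute both expressions back into the displayed quotient: the numerator becomes $D_{in}(S,S)+D_{in}(S,A_i)$, and the denominator becomes the sum of that numerator with $E_{out}(S,\bar S)+E_{out}(A_i)$, which is precisely Eq. \ref{eq:S+A}. The main obstacle, as flagged above, is the boundary accounting rather than the volume accounting: one must carefully track the edges whose status flips from crossing to internal (the $S$–$A_i$ edges), and ensure the intra-$A_i$ edges are attributed only to $vol$ and never to the cut, which hinges entirely on the correct reading of the overlapping term $links(\bar S,A_i)$.
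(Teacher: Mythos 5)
Your proof is correct and follows essentially the same route as the paper's: both expand $D_{in}(S\cup A_i)$ via the disjoint decomposition $links(S,S)+links(A_i,A_i)+links(S,A_i)$ and the cut via $links(S,\bar S)+links(\bar S,A_i)-links(S,A_i)$, then substitute into the definition of $f$. Your explicit remark that $links(\bar S,A_i)$ must be read as edges from $A_i$ to $\bar S\setminus A_i$ (so intra-$A_i$ edges are not charged to the cut) is a point the paper's proof uses implicitly without comment.
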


\begin{proof}
According to the proof of Definition \ref{def:f(S)},
we have $E_{out}(S,\bar{S})=links(S,\bar{S})$ and $D_{in}(S,S)=2\times links(S,S)$. And from Eq.\ref{eq:D_in}, the internal volume of $S\cup A_{i}$ is
{\scriptsize
\begin{flalign}
   D_{in}(S\cup A_{i}) &= 2 \times links(S\cup A_{i}, S\cup A_{i}) \nonumber\\&= 2 \times links(S, S) + 2\times links(A_{i}, A_{i}) + 2 \times links(S, A_{i}) \nonumber\\&= D_{in} (S,S) + D_{in}(S, A_{i})
\label{eq:D1}
\end{flalign}
}

From Eq.\ref{eq:E_out1}-\ref{eq:E_out2}, the number of cut edges of $S\cup A_{i}$ is
{\scriptsize
\begin{flalign}
    E_{out}(S\cup A_{i}) &= links(S\cup A_{i}, \overline{S\cup A_{i}}) \nonumber\\&= links(S,\bar{S}) +links(\bar{S}, A_{i})-links(S, A_{i}) \nonumber\\& = E_{out}(S,\bar{S}) + E_{out}(\bar{S},A_{i}) - E_{out}(S,A_{i})
\label{eq:E1}
\end{flalign}
}
Therefore, from Eq.\ref{eq:D1}-\ref{eq:E1}, we get Eq. \ref{eq:S+A}  = Eq. \ref{eq:S+Ai}
{\scriptsize
\begin{equation}
   f(S\cup A_{i}) =\frac{ D_{in}(S\cup A_{i})}{ D_{in}(S\cup A_{i}) + E_{out}(S\cup A_{i})}
\label{eq:S+Ai}
\end{equation}
}
\end{proof}

Hence, to calculate the gain $\Delta f(S\cup A_{i})$, we only need to update $D_{in}(S, A_{i})$, $E_{out}(\bar{S}, A_{i})$, and $E_{out}(S, A_{i})$.

\begin{example}
Considering the example of Figure \ref{fig:max_cliques}, Figure \ref{fig:expansion} shows the relevant steps of the expansion. Firstly, we initialize the community as $S' = S = \{v_5, v_6, v_7, v_8\}$, thus $f(S') = f(S) = \frac{2 \times 6}{2 \times 6 + 8} = \frac{12}{20} = 0.6$. Next, we consider $f(S' \cup \{v_{top}\})$ after expanding the neighbors of 
$S'$. We obtain $f(S' \cup \{v_p\}) = \frac{2 \times 7}{2 \times 7 + 11} = \frac{14}{25} = 0.560$ and $f(S' \cup \{v_q\}) = \frac{2 \times 7}{2 \times 7 + 10} = \frac{14}{24} = 0.583$, where $p \in [0,4]$ and $q \in [9,11]$. We select any vertex (such as $v_9$) from the set with maximum $f(S' \cup \{v_q\})$ to add to $S'$ and update $A_1=\{v_9\}$. Since $\Delta f(S' \cup \{v_9\})<0$, we consider adding the next vertex that maximizes $\Delta f(S'\cup \{v_{top}\})$ to $S'$. Similarly, we get $f(S'\cup \{v_p\}) = \frac{2 \times 8}{2 \times 8 + 13} = \frac{16}{29} = 0.551$, $f(S'\cup \{v_{10}\}) = f(S'\cup \{v_{11}\}) = \frac{2 \times 9}{2 \times 9 + 10} = \frac{18}{28} = 0.642 $ and $f(S'\cup \{v_{12}\})=\frac{2\times 8}{2 \times 8 + 11} = \frac{16}{27} = 0.592$. When adding $v_{10}$ to $S'$, we have $f(S'\cup \{v_{10}\}) = 0.642 > f(S)=0.6$ and $A_2=\{v_9,v_{10}\}$. Thus, $S$ can be updated as $S\cup A_2$.
\end{example}

\begin{figure}[t!]
\centering
\includegraphics [width=0.45\textwidth] {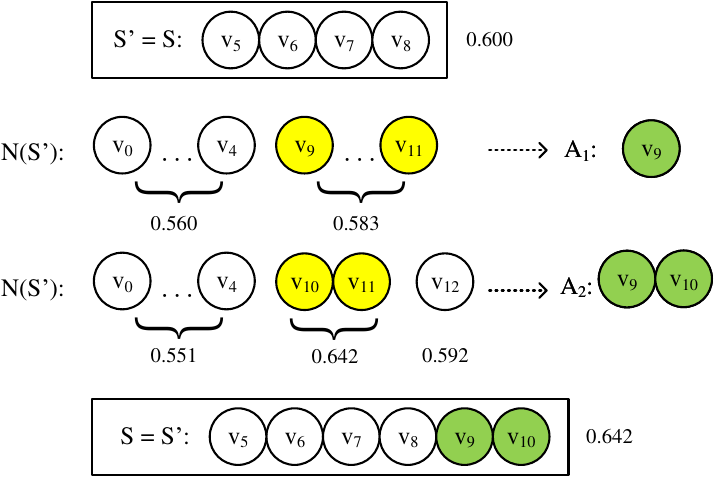}
\caption{An example of the expansion step. Given the initial community $S$, where $D_{in}(S, S) = 2 \times 6 = 12$ and $E_{out}(S, \bar{S}) = 8 $. When adding $A_2 = \{v_{9}, v_{10}\}$, we have $D_{in}(S, A_2) = 2 \times 1 + 2 \times 2 = 2 + 4 = 6$, $E_{out}(\bar{S}, A_2) = 5$, and $E_{out}(S, A_2) = 3$. Therefore, $f(S \cup A_2) = \frac{12 + 6}{12 + 8 + 6 + 5 - 3} = \frac{18}{28} = 0.642$.} 
\label{fig:expansion}
\end{figure}

\subsubsection{The verification stage} \label{subsec:checkingPhase}

Although we expand the community in the direction of improving $f(S)$, the non-monotonicity of \textit{conductance} means that previously added vertices may also adversely affect $f(S)$. Therefore, we introduce a verification stage (Algorithm  \ref{algor:checking}) to iteratively remove the vertex that is unfavorable to $f(S)$, which also helps the expansion stage to consider more vertices. Specifically, we let $B_{j}$ represent the set of boundary vertices belonging to $S$ that does not contain $q$ and has at least one edge in $\bar{S}$, where $j$ is the size of $B$ (Line 1). 
We define a flag $changed$ to indicate whether the community has been modified and start flagging it as $False$ (Line 2). Then we traverse $B_j$ and iteratively remove the vertex that makes $f(S-\{v_i\})$ increase without destroying community connectivity (Lines 3-7). If such a removal exists, we change the flag to $True$ (Line 8).

\begin{algorithm}[t]
\caption{\textit{ Verification ($S$, $q$)}} \label{algor:checking}
\begin{flushleft}
\hspace*{0.02in}{\bf Input:} 
An extended community $S$; query vertex $q$\\
\hspace*{0.02in}{\bf Output:} 
A pruned community $S$
\end{flushleft}
\begin{algorithmic}[1]
\State $B_j\leftarrow \{v_i\in S,v_i\neq q|(v_i,u)\in E,u\in \bar{S}\}$
\State changed $\leftarrow$ False
\For{$v_i$ in $B_j$}
\State $S''\leftarrow S-\{v_i\}$
\If{$\Delta f(S'')>0$ and $S''$ is connected}
\State $B_{j-1} \leftarrow B_{j} - \{v_{i}\}$; $j \leftarrow j-1$
\State $S \leftarrow S - \{v_{i}\}$; update $f(S)$
\State changed $\leftarrow$ True
\EndIf
\EndFor
\State \textbf{return} $S$
\end{algorithmic}
\end{algorithm}

\begin{definition}
[Quality gain $\Delta f(S- B_j)$] \label{thm:gain--}
Given any vertex set $S\subseteq S_G$, the quality gain obtained after shrinking $S$ to $S-B_j$ is expressed as 
\begin{equation}
    \Delta f(S-B_{j}) =f(S-B_{j}) - f(S)
\label{eq:gain--}
\end{equation}
\end{definition}

Similar to Eq. \ref{eq:gain++}, $\Delta f(S- B_{j})$ only needs to calculate the connections related to $B_{j}$, that is, $D_{in}(S,B_j)$, $E_{out}(S,B_j)$ and $E_{out}(\bar{S},B_j)$.

\begin{theorem}
[Quality of $f(S- B_j)$] 
Given any vertex set $S\subseteq S_G$, the quality of $S - B_j$ is expressed as \begin{equation}
\scriptsize
    f(S- B_{j}) = \frac{D_{in}(S,S)-D_{in}(S,B_{j})}{D_{in}(S,S) + E_{out}(S,\bar{S}) - D_{in}(S,B_{j})- E_{out}(B_{j})} 
\label{eq:S-B}
\end{equation}
where $E_{out}(B_{j}) =E_{out}(\bar{S},B_{j}) -E_{out}(S,B_{j})$.
\end{theorem}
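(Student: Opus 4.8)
The plan is to mirror the proof of the expansion theorem (Eq.~\ref{eq:S+A}), since deleting the boundary set $B_j$ from $S$ is exactly the inverse of the expansion operation: I regard $T := S\setminus B_j$ as the base set and $B_j$ as the set that, when re-added, recovers $S$. First I would invoke the identity established in the proof of Definition~\ref{def:f(S)}, namely $f(X)=\frac{D_{in}(X,X)}{D_{in}(X,X)+E_{out}(X,\bar{X})}$, applied to $X=S\setminus B_j$. It then suffices to evaluate the internal volume $D_{in}(S\setminus B_j,\,S\setminus B_j)$ and the cut $E_{out}(S\setminus B_j)$ and verify that they coincide with the numerator and denominator of Eq.~\ref{eq:S-B}.

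For the numerator, since $S$ is the disjoint union of $T$ and $B_j$, the internal links split as $links(S,S)=links(T,T)+links(T,B_j)+links(B_j,B_j)$. Removing $B_j$ destroys both the links internal to $B_j$ and the links joining $B_j$ to the survivors $T$, so $D_{in}(T,T)=D_{in}(S,S)-2\bigl(links(T,B_j)+links(B_j,B_j)\bigr)=D_{in}(S,S)-D_{in}(S,B_j)$, which is the numerator of Eq.~\ref{eq:S-B}, provided the cross term $links(S,B_j)$ appearing inside $D_{in}(S,B_j)$ (Eq.~\ref{eq:D_in}) is read as the edges between $B_j$ and the surviving vertices $S\setminus B_j$. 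This reuses Eq.~\ref{eq:D_in} verbatim, with $B_j$ playing the role of $A_i$.

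The delicate step is the cut $E_{out}(S\setminus B_j)$, because deleting $B_j$ has two opposing effects on the boundary. After removal the complement of $T$ is $\bar{T}=\bar{S}\cup B_j$, so $E_{out}(T)=links(T,\bar{S})+links(T,B_j)$. Splitting $links(S,\bar{S})=links(T,\bar{S})+links(B_j,\bar{S})$ and substituting yields $E_{out}(T)=E_{out}(S,\bar{S})-links(B_j,\bar{S})+links(T,B_j)$: the edges from $B_j$ to the outside $\bar{S}$ leave the cut entirely, while the edges from $B_j$ into the rest of $S$ newly become cut edges. Collecting these as $E_{out}(B_j)=E_{out}(\bar{S},B_j)-E_{out}(S,B_j)$ (again with $E_{out}(S,B_j)$ read as $links(S\setminus B_j,B_j)$) gives $E_{out}(T)=E_{out}(S,\bar{S})-E_{out}(B_j)$. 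Getting the signs of this net adjustment correct --- the mirror image of Eq.~\ref{eq:E1} with the roles of ``added'' and ``cut'' interchanged --- is the main obstacle; everything else is routine bookkeeping.

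Finally I would substitute the two expressions into $f(S\setminus B_j)=\frac{D_{in}(T,T)}{D_{in}(T,T)+E_{out}(T)}$ and simplify, which reproduces Eq.~\ref{eq:S-B} exactly. As a byproduct this confirms the claim following Definition~\ref{thm:gain--}, namely that $\Delta f(S-B_j)$ is determined by $D_{in}(S,B_j)$, $E_{out}(S,B_j)$ and $E_{out}(\bar{S},B_j)$ alone, each maintainable in $O(1)$ incremental time.
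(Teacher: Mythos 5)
Your proposal is correct and follows essentially the same route as the paper: it decomposes $D_{in}(S-B_j)$ and $E_{out}(S-B_j)$ into the $B_j$-related terms via $links(\cdot,\cdot)$ and then substitutes into $f(X)=\frac{D_{in}(X,X)}{D_{in}(X,X)+E_{out}(X,\bar{X})}$, exactly as in Eq.~\ref{eq:D2}--\ref{eq:E2}. Your explicit remark that the cross term $links(S,B_j)$ must be read as $links(S\setminus B_j,\,B_j)$ is a clarification the paper leaves implicit, but it does not change the argument.
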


\begin{proof}
The internal volume of $S-B_{j}$ is as follow: 
{\scriptsize
\begin{align}
D_{in}(S-B_{j}) &= 2\times links(S-B_{j}, S-B_{j}) \nonumber\\&= 2 \times links(S, S)-2 \times links(B_{j}, B_{j}) - 2 \times links(S, B_{j}) \nonumber\\&= D_{in}(S,S) - D_{in}(S,B_{j})
\label{eq:D2}
\end{align}}
The number of cut edges of $S-B_{j}$ is
{\scriptsize
\begin{flalign}
  E_{out}(S-B_{j}) &= links(S-B_{j}, \overline{S-B_{j}}) \nonumber\\&= links(S,\bar{S}) + links(S,B_{j}) -links(\bar{S},B_{j}) \nonumber\\&= E_{out}(S,\bar{S}) + E_{out}(S,B_{j}) - E_{out}(\bar{S},B_{j})
\label{eq:E2}
\end{flalign}}
Therefore, from Eq. \ref{eq:D2}-\ref{eq:E2}, we get Eq. \ref{eq:S-B}  = Eq. \ref{eq:S-Bj}
{\scriptsize
\begin{equation}
    f(S-B_{j}) = \frac{D_{in}(S-B_{j})}{D_{in}(S-B_{j}) + E_{out}(S-B_{j})}
\label{eq:S-Bj}
\end{equation}}
\end{proof}

\begin{example}
Reconsidering Figure \ref{fig:max_cliques}, Figure \ref{fig:checking} illustrates the validation steps of the algorithm. Assuming the query vertex is $q = v_6$, the expanded community $S = \{v_5, v_6, v_7, v_8, v_9, v_{10}, v_{11}, v_{12}\}$, $f(S) = \frac{2 \times 15}{2 \times 15 + 5} = \frac{30}{35} = 0.857$ and $B_j = B_1 = \{v_5\}$. After removing $v_5$, we have $f(S - \{v_5\}) = \frac{2 \times 12}{2 \times 12 + 3} = \frac{24}{27} = 0.888 > f(S)=0.857$. Therefore, we update $S$ to $S- \{v_5\}$, resulting in $f(S)=0.888$ and $B_j=B_2=\{v_7,v_8\}$. Similarly, we traverse $B_2=\{v_7,v_8\}$. Since removing $v_8$ would break the community's connectivity, we only consider removing $v_7$. We have $f(S-\{v_7\})=\frac{2\times 10}{2\times 10+4}=\frac{20}{24}=0.833<f(S)$. Therefore, the final community is $S = \{v_6,v_7,v_8,v_9,v_{10},v_{11},v_{12}\}$, with $f(S)=0.888$.
\end{example}

\begin{figure}[t!]
\centering
\includegraphics [width=0.45\textwidth] {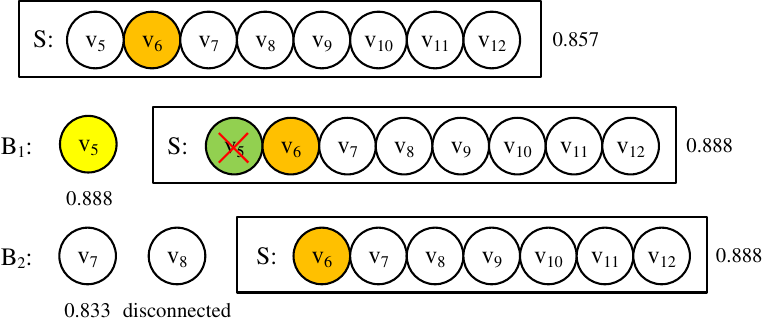}	
\caption{An example of the verification step. Let $q = v_6$, and the expanded community is $S$, where $D_{in}(S, S) = 2 \times 15 = 30$ and $E_{out}(S, \bar{S}) = 5$. $B_1 = \{v_5\}$ and after removing $v_5$, we have $D_{in}(S, B_1) = 2 \times 3 = 6$, $E_{out}(\bar{S}, B_1) = 5$, and $E_{out}(S, B_1) = 3$. Therefore, $f(S - B_1) = \frac{30 - 6}{30 + 5 - 6 - 5 + 3} = \frac{24}{27} = 0.888$.} 
\label{fig:checking}
\end{figure}

\subsection{Our Advanced Algorithm: \emph{SCCS}} 
\label{subsec:LCS}

In this subsection, we synthesize the above stages and present the overall four-stage algorithm, \underline{S}ubgraph-\underline{C}onductance-based \underline{C}ommunity \underline{S}earch (\emph{SCCS}). Specifically, Algorithm \ref{algor:SCCS} first samples the graph $G$ to avoid visiting unnecessary vertex regions, thereby accelerating the efficiency of community search (Line 1). Subsequently, the largest clique $S\in initialCom$ containing $q$ on the sampling subgraph $S_G$ is determined as the initial community to ensure cohesiveness within the community (Line 2). Then $S$ is expanded in the direction of improving the community quality $f(S)$, which considers both the community's internal and external connections (Line 4). Finally, to further improve the accuracy, we introduce a verification stage to check if there are vertices that need to be removed that are detrimental to $f(S)$ (Line 5). Repeat this expansion and verification operation multiple times until there are no vertices that can be added or deleted (Lines 6-7), and the returned community is target community $S^*$ (Lines 8-9).

\stitle{Complexity Analysis of \emph{SCCS}.} Algorithm \ref{algor:Sampling subgraph} samples $G$ through the idea of breadth-first search until the stopping condition is met to obtain the sampled subgraph $S_G$. Specifically, if $|S_{G}|\geq l$ and $d_{G}(u, q)> dp$ where $u\in S_{G}$,the sampling algorithm processes at most $l$ vertices and their associated edges. This results in a time complexity of $O(l+l(l-1)/2)$. Similarly, if $|S_{G}|> h$ during the sampling process, it processes at most $h$ vertices and associated edges. This results in a worst-case time complexity of $O(h+h(h-1)/2)$. Consequently, the time complexity of sampling can be approximated as $O(\max\{l+l(l-1)/2,h+h(h-1)/2\})$, which is further simplified to $O(h^2)$. Next, finding a maximum clique in $S_{G}$ is an NP-hard problem, hence there is no deterministic algorithm that can solve it in polynomial time. In the worst-case scenario, the time complexity of finding a maximum clique in $S_{G}$ is exponential, specifically $O(3^{h/3})$, where $h$ is the maximum number of vertices in the sampling. Then, Algorithm \ref{algor:expansion} focuses on the while loop. The time complexity of finding and removing the vertex with the maximum gain $\Delta f(S\cup \{v_{top}\})$ from the heap is $O(h \log h)$. The time complexity of updating $S$ is $O(1)$ because $count$ is a constant that does not vary with the input size. Therefore, by looping these operations $h$ times, the approximate time complexity of Algorithm \ref{algor:expansion} is $O(h^2 \log h)$. Finally, the time complexity of Algorithm \ref{algor:checking} to initialize the boundary set $B_j$ is $O(h)$. The time complexity to verify connectivity in the main loop is $O(h^2)$, and the time complexity of other operations is $O(1)$. Thus, the total time complexity of Algorithm \ref{algor:checking} is $O(jh^2)$, where $j$ is the number of boundary vertices.

\begin{algorithm}[t]
\caption{\textit{ SCCS ($G$, $q$, $dp$, $l$, $h$, $count$)}}  
\label{algor:SCCS}
\begin{flushleft}
\hspace*{0.02in} {\bf Input:} 
An undirected graph $G(V,E)$; query vertex $q$; depth $dp$; minimum threshold $l$; maximum threshold $h$; extend the upper limit of the number of vertices at once $count$\\
\hspace*{0.02in} {\bf Output:} 
A target community $S^*$
\end{flushleft}
\begin{algorithmic}[1]
\State $S_{G}\leftarrow Sampling(G,q,dp,l,h)$
\State $S\leftarrow initialCom(S_{G},q)$ 
\While {True} 
\State $S\leftarrow Expansion(S, q, count)$
\State $S\leftarrow Verification(S,q)$
\If{changed == False}
\State \textbf{break}
\EndIf
\EndWhile
\State $S^*\leftarrow S$
\State \textbf{return} $S^*$
\end{algorithmic}
\end{algorithm}

Based on the above analysis, the time complexity of \emph{SCCS} is $O(3^{h/3} + wh^2(\log h + j)$, where $w$ is the number of iterations in the outer while loop. \emph{SCCS} operates on the adjacency list representation of the graph $G$, leading to a space complexity of $O(n+m)$, where $n$ is the number of vertices and $m$ is the number of edges.
\section{Experimental Evaluation}\label{sec:experiments}
In this section, we conduct thorough experiments to evaluate the performance of the proposed solutions, which are run on an Ubuntu 18.04 server with an Intel Xeon 2.50GHz CPU and 32GB of RAM. 

\subsection{Experimental Setup}

\stitle{Datasets.} 
We evaluate the solutions on six real-world datasets (http://snap.stanford.edu/) with ground-truth communities (Table \ref{tab:data}), sorted in ascending order of edges. Amazon is a network of product purchases and defines a real community for each product category. DBLP constructs a co-authorship network where authors who publish papers in certain journals or conferences form a community. Youtube, LiveJournal (LiveJ), Orkut, and Friendster are all social networks where a group of users building friendships is considered a ground-truth community. We evaluate the top 5,000 highest-quality communities for each dataset \cite{DBLP:conf/icdm/YangL12}, excluding ground-truth communities with fewer than 3 vertices. Besides, we also use the well-known NetworkX Python package \cite{nx} to generate the synthetic \emph{LFR} \cite{lancichinetti2009detecting} benchmark datasets for evaluating the performance. Note that these \emph{LFR} datasets can simulate the community properties in the real world.

\stitle{Competitors.} The following several cutting-edge competitors are complemented for comparison: \textit{CSM} \cite{DBLP:conf/sigmod/CuiXWW14,csm1}  targets to identify the subgraph containing the query vertex $q$ and having the max-min degree. \textit{TCP} \cite{DBLP:conf/sigmod/HuangCQTY14,truss1,truss2} models higher-order truss communities based on triangle connectivity and $k$-truss, aiming to return the maximal $k$-truss subgraph. SM \cite{DBLP:conf/webi/LuoWP06} finds the community with maximum subgraph modularity based on subgraph degree. LM \cite{clauset2005finding} explores the community with maximum local modularity based on boundaries. \textit{HK-Relax} \cite{DBLP:conf/kdd/KlosterG14,hk1,hk2} detects the community by the random walk diffusion. \textit{PPRCS} is our basic algorithm obtained by improving \textit{PPR\&NIBBLE} \cite{DBLP:conf/focs/AndersenCL06,ppr1} (Section \ref{subsec:The CSPPR Algorithm}). \textit{SCCS} is our advanced and efficient algorithm (Section \ref{sec:LCS}). Note that \textit{CSM} and \textit{TCP} focus on capturing dense communities using $k$-core and $k$-truss, respectively, \textit{SM} and \textit{LM} find communities through defined modified modularity, and \textit{HK-Relax}, \textit{PPRCS} and \textit{SCCS} utilize \textit{conductance} as the objective function for identifying resultant community.

\begin{table}[t!] 
\centering
\caption{\small Dataset statistics. $d$ is the diameter of the network.} \vspace{-0.3cm}
\scalebox{1}{
\begin{tabular}{c|ccccc}
\toprule
Dataset & $Type$ & $n$ & $m$  &$d$\\
\midrule
Amazon & Purchasing & 334,863  &925,872  & 44\\ 

DBLP & Authorship & 317,080  &1,049,866  & 21\\

Youtube &Social  &1,134,890  & 2,987,624 &20\\ 	 

LiveJ & Social  &3,997,962 & 34,681,189 & 17 \\

Orkut & Social &3,072,441 & 117,185,083 & 9\\

Friendster & Social & 65,608,366 & 1,806,067,135 & 32\\
\bottomrule			\end{tabular}}\vspace{-0.3cm}
\label{tab:data}
\end{table}

\begin{table*}[t!]
\centering
\caption{Running time (second) of various methods. The best running times are marked in \textbf{bold}. $--$ means that the method cannot obtain results within five hours. AVG.RANK is the average rank of each method across testing datasets.} 
\scalebox{1}{
\begin{tabular}{c|ccccccc}
\hline
\textbf{Model} &Aamazon &DBLP &Youtube &LiveJ &Orkut &Friendster &AVG.RANK\\
\hline
\textit{CSM} &5.690 &6.932 &16.648 &150.534 &838.437 &$--$&6 \\	
\textit{TCP} &34.535 &86.653  &202.892 &4,112.539 & $--$&$--$ & 7\\
\hline
\textit{SM} &\textbf{0.001} &\textbf{0.002} &12.930 &1.524 & 1,130.617 &43.820&4\\
\textit{LM} &\textbf{0.001} &\textbf{0.002} &17.872 &\textbf{0.125}& 3,288.389 &5.927&3\\
\textit{HK-Relax} &0.055 &2.304 &49.684 &87.447& 562.709 &$--$ &5\\
\hline
\textit{PPRCS} &0.080 &0.040  &\textbf{0.387} &0.517 & \textbf{0.877} &\textbf{0.629} &\textbf{1}\\
\textit{SCCS} &0.004 &0.010 &7.982 &2.645 & 84.672 &11.418& 2 \\	
\hline
\end{tabular}}
\label{tab:running}
\end{table*} 

\stitle{Effectiveness Metrics.}  We adopt \textit{conductance} (Definition \ref{def:co}) and \textit{F1-score} to evaluate the quality of communities. \textit{Conductance} measures the internal cohesion and external sparsity of a community \cite{DBLP:conf/kdd/KlosterG14,DBLP:conf/focs/AndersenCL06,DBLP:journals/pvldb/WuJLZ15}. Lower \textit{conductance} indicates higher cohesion within the community and fewer connections outside.   \textit{F1-score} is a metric that considers both \textit{precision} and \textit{recall}. \textit{F1-score} is defined as:
 \begin{equation}
 F1-score = \frac{2\times Precision\times Recall}{Precision+Recall}
 \end{equation}
 where \textit{precision} and \textit{recall} are computed as follows:
\begin{flalign}
Precision &= \frac{True~Positives}{True~Positives+False~Positives} 
\\
Recall &= \frac{True~Positives}{True~Positives+ False~Negatives}
\end{flalign}

We compare the discovered community $S$ with the ground-truth community to evaluate algorithm performance. Higher similarity between $S$ and ground truth leads to higher \textit{precision}, \textit{recall}, and \textit{F1-score}. \textit{F1-score} ranges from 0 to 1, with values closer to 1 indicating better performance. Besides, we also measure the community size.

\stitle{Parameters.} Consistent with previous research \cite{DBLP:conf/focs/AndersenCL06}, we set $\alpha = 0.15$ and $r_{max} = 1/n$ as the default parameters for our basic algorithm \textit{PPRCS}. Similarly, following \cite{DBLP:journals/tkdd/HeSBH19}, we default the minimum and maximum vertex counts for our advanced algorithm \textit{SCCS} to $l=300$ and $h=5000$, respectively. Besides, we randomly select 50 query vertices from 50 ground truth communities to ensure reliability and report average running time and quality.

\subsection{Empirical Results on Real-world Graphs} 
\stitle{Exp-1: Running time of different methods.} Table \ref{tab:running} shows the running time of different algorithms on various datasets. We observe the following conclusions: (1) Our algorithms (i.e., \emph{PPRCS} and \emph{SCCS}) are on average faster than other methods. In particular, \emph{PPRCS} is the champion and \textit{SCCS} is the runner-up, while \emph{SCCS} has better \textit{F1-score} than \emph{PPRCS} (Table \ref{table:metric}). On the Friendster dataset with 1.8 billion edges, our \emph{PPRCS} (resp., \emph{SCCS}) takes only about 0.6 (resp., 11) seconds,
while \emph{CSM}, \emph{TCP}, and \emph{HK-Relax} fail to produce results within five hours. (2) \textit{CSM} and \textit{TCP} have the worst running time. This is because \textit{CSM} is sensitive to the choice of query vertices. Specifically, the higher the degree of the query vertex, the more iterations of \textit{CSM} and the slower the convergence speed. On the other hand, \textit{TCP} needs to count the number of triangles that each edge participates in, resulting in poor efficiency for massive graphs. (3) \textit{SM} and \textit{LM} have faster running time on small datasets (i.e., Aamazon and DBLP) because they optimize community structures by only focusing on changes in node-edge connections within the neighborhood. However, \textit{SCCS} requires additional time in the initial stage to find the largest clique as the initial community and in the verification stage to check the connectivity of the community. Fortunately, on large graphs (such as Orkut), \textit{SCCS} is much more efficient than \textit{SM} and \textit{LM} due to its subgraph sampling strategy. Summing up, our algorithms have a more stable running time and are less influenced by graph size  when contrasted with baselines.

\stitle{Exp-2: Community size of different methods.} Previous studies \cite{DBLP:conf/www/LeskovecLDM08,DBLP:journals/tbd/WangYBH24} have indicated that real-world communities tend to be relatively small. Building upon this observation, we conduct a statistical analysis of community sizes (Table \ref{tab:range}). The results show that, except Orkut, at least 90\% of the community sizes falls on [3,100]. To verify whether the results of different algorithms match the real-world community sizes, Table \ref{tab:size} displays the community sizes captured by various algorithms on different datasets. We have the following observations: (1) Communities identified by \textit{CSM} and \textit{TCP} tend to be large in size due to the fact that they are only constrained by internal community connections. Specifically, \textit{CSM} stops expanding community $S$ when the maximum-minimum degree of the vertices in $S$ no longer increases, while \textit{TCP} stops when expanding $S$ no longer satisfies triangular connectivity. (2) The community sizes of other algorithms are smaller than those of \textit{CSM} and \textit{TCP} because they are constrained by both internal and external connections of the community, resulting in more stringent conditions for community expansion. (3) \textit{PPRCS} and \textit{HK-Relax} perform the graph diffusion, which tend to produce large communities that often do not align with the characteristics of real-world communities. In contrast, \textit{SM}, \textit{LM} and \textit{SCCS} are all more suitable for identifying small-scale communities by optimizing their objective functions.

\begin{table}[t!] 
\centering
\caption{\small Size distribution of ground-truth community.} \vspace{-0.3cm}
\fontsize{8}{10}\selectfont
\scalebox{1}{
\begin{tabular}{c|cccccc}
\toprule
Dataset & $[3, 50)$ & $[50, 100)$ & $[100, 150)$  &$[150, 200)$ &  $200\leq $\\
\midrule
Amazon & 4,836 & 132  & 25 & 3 & 4 \\ 
DBLP & 4,937 & 9  & 5  & 4 & 45 \\
Youtube & 3,118  & 129 & 40 & 19 & 49 \\
LiveJ & 4,424  & 430  & 69 & 27 & 50  \\ 	 
Orkut & 630 & 1,579 & 807 & 449 & 1,535 \\
Friendster & 3,430 & 1,114 & 257 & 77 & 122 \\
\bottomrule			\end{tabular}}\vspace{-0.3cm}
\label{tab:range}
\end{table}

\begin{table}[t!]
\centering
\caption{\small Community sizes captured by different algorithms on different datasets. $--$ means that the method cannot obtain results within five hours.} \vspace{-0.3cm}
\fontsize{7}{10}\selectfont
\scalebox{0.95}{
\begin{tabular}{c|cccccc}
\toprule
\multicolumn{1}{c|}	{Methods} & Amazon& DBLP& Youtube& LiveJ& Orkut & Friendster\\
\midrule		
\textit{CSM}& 58,426 & 46,141 & 79,160 & 728,885 & 1,189,342 & $--$\\
\textit{TCP}& 10 & 16,057 & 105,360 & 541,491 & 2,622,574 &$--$\\	
\textit{SM}& 10 & 10 & 47 & 30 & 136&46 \\
\textit{LM}  & 9 & 8 &  35 & 20 & 88 &25\\
\textit{HK-Relax} & 20 & 3,980 & 370,685 & 397,628 & 876,618& $--$\\
\textit{PPRCS} & 36 & 51 & 6,579 & 468 & 3,622 & 78,044\\
\textit{SCCS} & 11 & 10 & 254 & 27 &  122&69\\
\bottomrule	
\end{tabular}} \vspace{-0.3cm}
\label{tab:size}
\end{table}

\stitle{Exp-3: Community quality of different methods.} Table \ref{table:metric} reports the \textit{F1-score} and \textit{conductance} of different algorithms on various datasets. We have the following insights: (1) Our \textit{SCCS} exhibits the best \textit{F1-score}, which means that \textit{SCCS} can more accurately identify the ground-truth communities compared to other competitors. This superiority stems from \textit{SCCS}'s initial focus on capturing maximal cliques, ensuring tight connectivity within communities. Subsequently, \textit{SCCS} takes into account both the closeness within the community and the sparsity between communities in the expansion and verification stages, further ensuring the separation between different communities. Additionally, we allow up to $count$ vertices to be added at a time during the expansion stage, thus preventing the community search from prematurely falling into local optima (Section \ref{subsec:Community Extraction}). 
(2) \textit{CSM} and \textit{TCP} exhibit the worst  \textit{conductance} due to their focus solely on the tightness within communities, neglecting the sparsity between communities. In contrast to \textit{CSM} and \textit{TCP}, the remaining algorithms consider connections within and outside the community simultaneously. For example, \textit{SM} and \textit{LM} are based on the graph structure and greedily add one vertex at a time that makes the objective function optimal, while \textit{SCCS} relaxes this constraint by allowing the consideration of adding multiple vertices to communities at once, resulting in lower \textit{conductance}. (3) Both \textit{HK-Relax} and \textit{PPRCS} rely on random walks to produce communities with better \textit{conductance}, but \textit{SCCS} achieves smaller communities (Table \ref{tab:size}) and superior \textit{F1-score}. So, \textit{SCCS} outperforms other algorithms in capturing the qualities of real-world communities. 

\begin{table*}[t!]
\caption{\textit{F1-score} and \textit{Conductance} (\textit{Con}) on different datasets. The best result in each metric are marked in \textbf{bold}. $--$ means that the method cannot obtain results within five hours. AVG.RANK is the average rank of each methods across the testing datasets.} 
\centering
\scalebox{0.9}{
\begin{tabular}
{c|cccccccccccccc}
\hline
\multirow{2}{*}{\textbf{Model}} & 
\multicolumn{2}{c}{\textbf{Amazon}} & \multicolumn{2}{c}{\textbf{DBLP}} & \multicolumn{2}{c}{\textbf{Youtube}} &
\multicolumn{2}{c}{\textbf{LiveJ}} &
\multicolumn{2}{c}{\textbf{Orkut}} &
\multicolumn{2}{c}
{\textbf{Friendster}} &
\multicolumn{2}{c}{\textbf{AVG.RANK}}
\\
&\textit{F1-score} & \textit{Con} & \textit{F1-score} & \textit{Con} & \textit{F1-score} & \textit{Con} & \textit{F1-score} & \textit{Con} & \textit{F1-score} & \textit{Con} & \textit{F1-score} & \textit{Con} & \textit{F1-score} & \textit{Con}\\
\hline	
\textit{CSM} & 0.607 & 0.262 & 0.417 & 0.412 & 0.060 & 0.621 & 0.169 & 0.529 & 0.000 & 0.480 & $--$ & $--$ & 7 & 7  \\
\textit{TCP} & 0.881 & 0.114 & 0.598 & 0.298 & 0.170 & 0.596 & 0.618 & 0.272 & 0.090 & 0.940 & $--$ & $--$ & 4 & 6 \\
\hline	
\textit{SM} & 0.879 & 0.085 & 0.650 & 0.225 & 0.204 & 0.524 & 0.651 & 0.192 & 0.354 & 0.536 & 0.301 & 0.685 & 2 &4\\
\textit{LM} & 0.865 & 0.112 & 0.610 & 0.287 & 0.202 & 0.552 & 0.652 & 0.239 &0.215 & 0.656 & 0.322 & 0.715 & 3 &5\\
\textit{HK-Relax} & 0.875 & 0.046 & 0.532 & \textbf{0.132} & 0.020 & \textbf{0.175} & 0.521 & \textbf{0.107} & 0.045& \textbf{0.137} & $--$ & $--$ & 6 & \textbf{1} \\
\hline
\textit{PPRCS} & 0.850 & \textbf{0.045} & 0.530 & 0.134 & 0.125 & 0.412 & 0.542 & 0.146 & 0.382 & 0.570 & 0.202 & \textbf{0.629} & 5 & 2\\
\textit{SCCS} & \textbf{0.905} & 0.063 & \textbf{0.696} & 0.187 & \textbf{0.241} & 0.550 & \textbf{0.695} & 0.188 & \textbf{0.523}& 0.598 & \textbf{0.492} & 0.735 & \textbf{1} & 3 \\
\hline	
\end{tabular}}
\label{table:metric}
\end{table*}



\stitle{Exp-4: Running time and quality of \textit{SCCS} with varying parameters.} We compare the running time of \textit{SCCS} on different query sets. We pre-classify the vertices in the real-world communities into three types based on degree: dense (degree greater than the average degree), random, and sparse (degree less than the average degree). As shown in Figure \ref{fig:query}, \textit{SCCS} exhibits fastest running time in sparse query sets. This can be attributed to the fact that vertices in sparse sets require to consider fewer neighbors at each stage, allowing \textit{SCCS} to converge faster. 
In contrast, \textit{SCCS} has the slowest running time in dense query sets since vertices in dense query sets have many edges and more neighbors need to be considered at each stage. Random query sets refer to a random mix of sparse and dense query sets, so the corresponding running time falls between these two extremes.

\begin{figure}[t!]
\centering	
\subfigure[\textit{Runtime}]{
\includegraphics[width=0.22\textwidth]{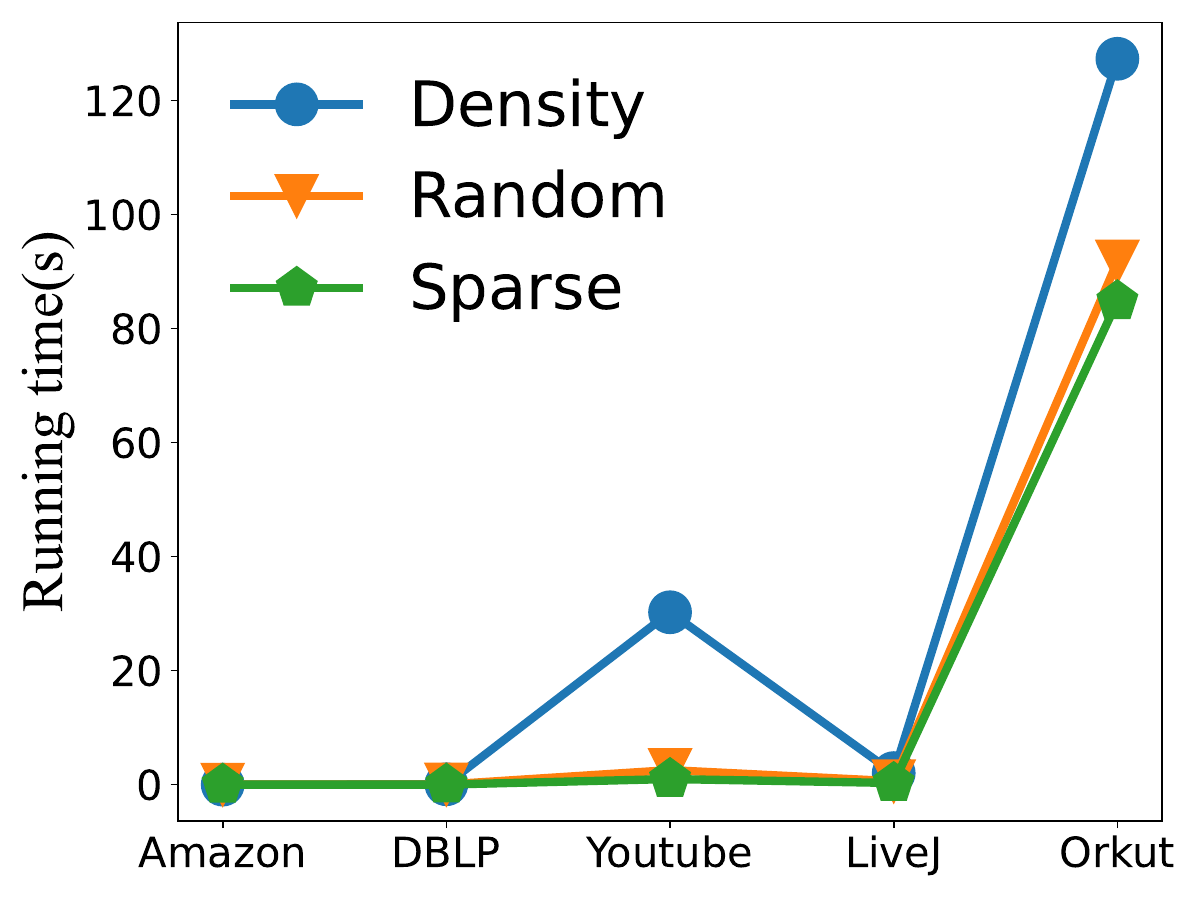}
\label{fig:query(a)}}
\subfigure[\textit{Runtime}]{
\includegraphics[width=0.22\textwidth]{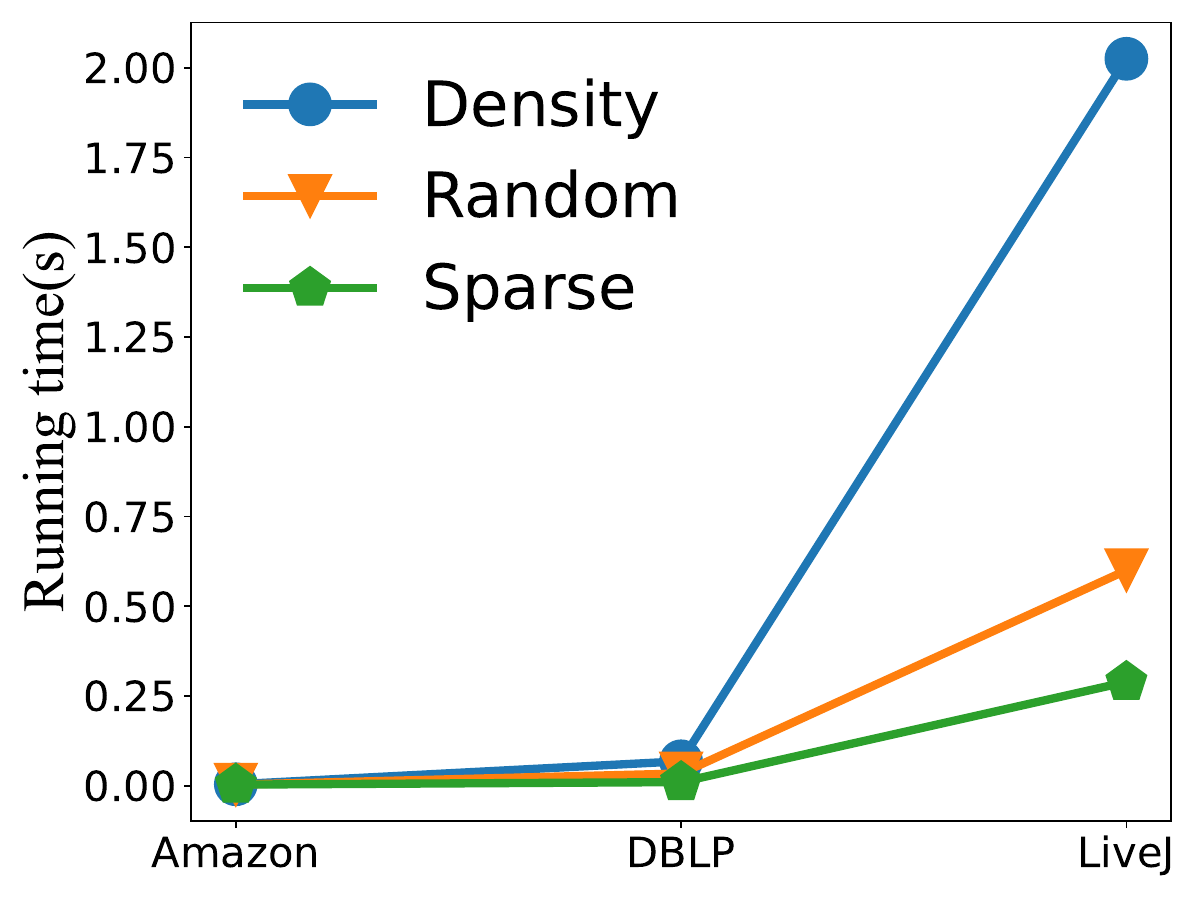}
\label{fig:query(b)}
}\vspace*{-0.3cm}
\caption{Running time of \emph{SCCS} on different query sets. (b) is an enlarged part of (a).}
\label{fig:query} \vspace{-0.5cm}
\end{figure}


On top of that, we also investigate the effect of different parameters $count$ and $dp$ on the performance of \emph{SCCS}. Specifically, we first evaluate the \textit{conductance} and \textit{F1-score} of \textit{SCCS} on Amazon and DBLP for different $count$. Similar trends can be found in other datasets. As shown in Figure \ref{fig:count}, with the  parameter $count$ increases, the \textit{conductance} linearly decreases, while the \textit{F1-score} initially increases and then decreases. The reason for this trend is that as $count$ increases, communities avoid prematurely converging to local optima, leading to a slight decrease in \textit{conductance} but an increase in community size. As the predicted community size increases, more true community members may be included, thereby improving \textit{recall}, which is the proportion of true community members successfully detected. However, with the increase in community size, there is a possibility of misclassifying some vertices that do not belong to the true community as community members, leading to a decrease in \textit{precision}. Therefore, considering that real communities typically have smaller sizes, larger predicted communities often exhibit higher \textit{recall} but lower \textit{precision}, leading to a decrease in \textit{F1-score}. So, we recommend users set $count = 2$  as the default for better performance. 

On the other hand,  we study the subgraph coverage, vertex sampling rate, and sampling time of target communities (i.e., $S$) at sampling depth $dp\in [1,2,3,4,5]$ on all datasets with ground-truth communities (i.e., $S_{Truth}$). $Vertex~sampling~rate = n_{S}/n$ and $Subgraph~coverage= |S_{Truth} \cap S|/|S_{Truth}|$, where a higher coverage indicates that the sampled subgraph is more representative. As shown in Table \ref{table:dp}, subgraph coverage, vertex sampling rate, and sampling time initially increase linearly with $dp$ and then level off. The reason for the latter is that the number of sampled vertices is limited to not exceed $h$, which also results in low vertex sampling rates and sampling times. We observe on YouTube that when $dp$ increases from 2 to 3, the coverage improves by 6\%, but when $dp$ increases from 3 to 4, the coverage only improves by 0.2\%. Thus, we recommend users set $dp=3$  as the default for better performance.

\begin{table*}[t!]
\caption{Subgraph coverage (Coverage for short), vertex sampling rate (Rate for short), and sampling time (second) of sampled subgraphs at different $dp$. $--$ means that the result is less than 0.0001. $Coverage = |S_{Truth} \cap S| / S_{Truth}$ and $Rate = n_{S} / n$.} 
\centering
\fontsize{7}{7}\selectfont 
\scalebox{0.9}{
\begin{tabular}
{c|cccccccccccc}
\hline
\multirow{2}{*}{\textbf{Depth}} & 
\multicolumn{2}{c}{\textbf{Amazon}} & \multicolumn{2}{c}{\textbf{DBLP}} & \multicolumn{2}{c}{\textbf{Youtube}} & 
\multicolumn{2}{c}{\textbf{LiveJ}} &
\multicolumn{2}{c}{\textbf{Orkut}}&
\multicolumn{2}{c}{\textbf{Friendster}}\\
& Coverage/Rate & Time & Coverage/Rate & Time & Coverage/Rate & Time & Coverage/Rate & Time & Coverage/Rate & Time & Coverage/Rate & Time\\
\hline	
$dp = 1$ & 0.997/0.0008 & 0.001 & 0.964/0.0009 & 0.004 & 0.783/0.0002 & 0.049 & 0.950/$--$ & 0.017 & 0.285/$--$ & 0.105 &0.579/$--$ &0.088 \\

$dp = 2$ & 0.997/0.0008 & 0.001 & 0.964/0.0009 & 0.002 & 0.823/0.0005 & 0.062 & 0.973/$--$ & 0.017 & 0.640/0.0010 & 0.865& 0.812/$--$ &1.097\\

$dp = 3$ & 0.997/0.0008 & 0.001 & 0.978/0.0014 & 0.003 & 0.883/0.0029 & 0.211 & 0.991/0.0002 & 0.055 & 0.650/0.0015 & 0.879 &0.817/$--$ &1.667\\	

$dp = 4$ & 0.997/0.0009 & 0.001 & 0.989/0.0070 & 0.018 & 0.885/0.0041 & 0.301 & 0.992/0.0009 & 0.209 & 0.650/0.0016 & 0.873 & 0.817/$--$& 1.722\\

$dp = 5$ & 0.997/0.0013 & 0.001 & 0.989/0.0128 & 0.034 & 0.886/0.0044 & 0.317 & 0.992/0.0011 & 0.281 & 0.650/0.0016 & 0.878 & 0.817/$--$ & 1.740\\
\hline	
\end{tabular}}
\label{table:dp}
\end{table*}

\begin{figure}[t!]
\centering
\subfigure[\textit{Conductance}]{	\includegraphics[width=0.22\textwidth]{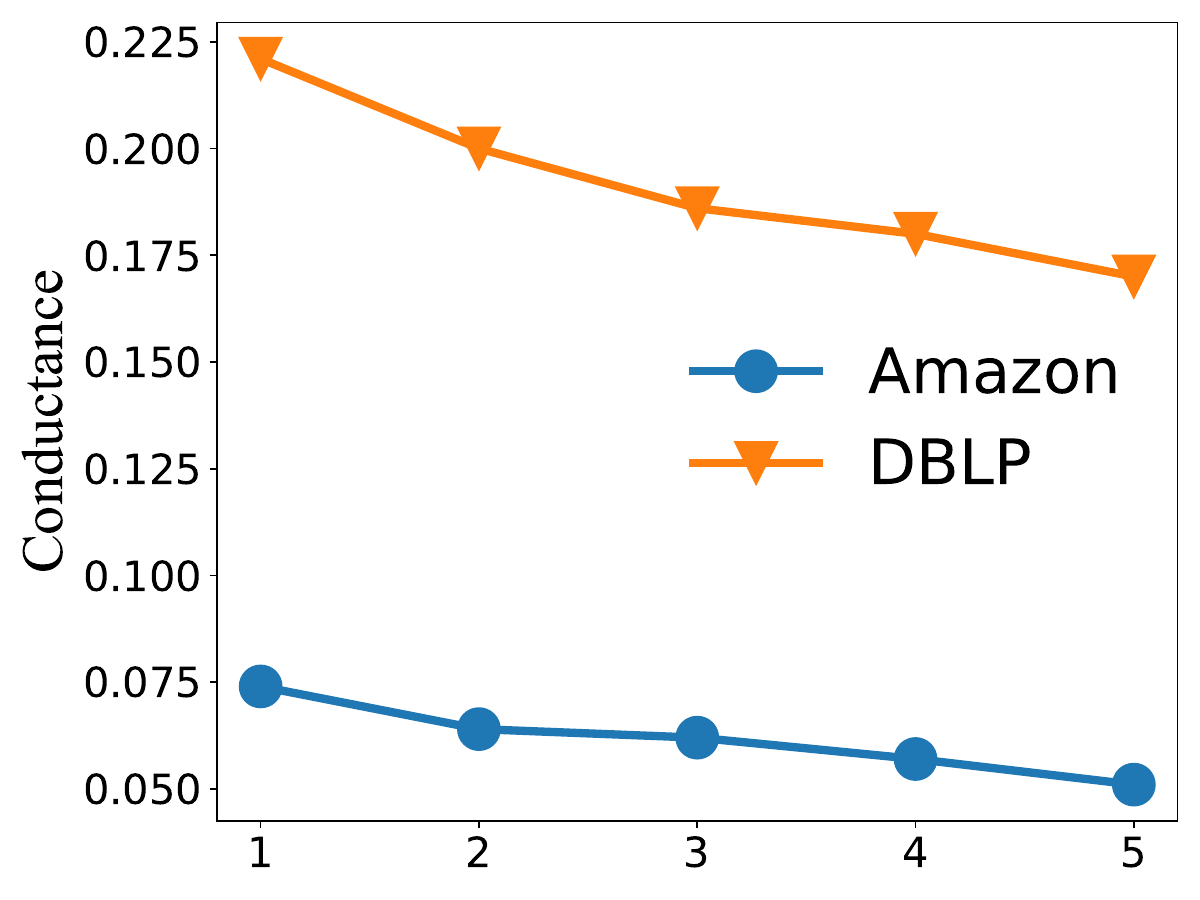}
\label{fig:count(a)}}
\subfigure[\textit{F1-score}]{	\includegraphics[width=0.22\textwidth]{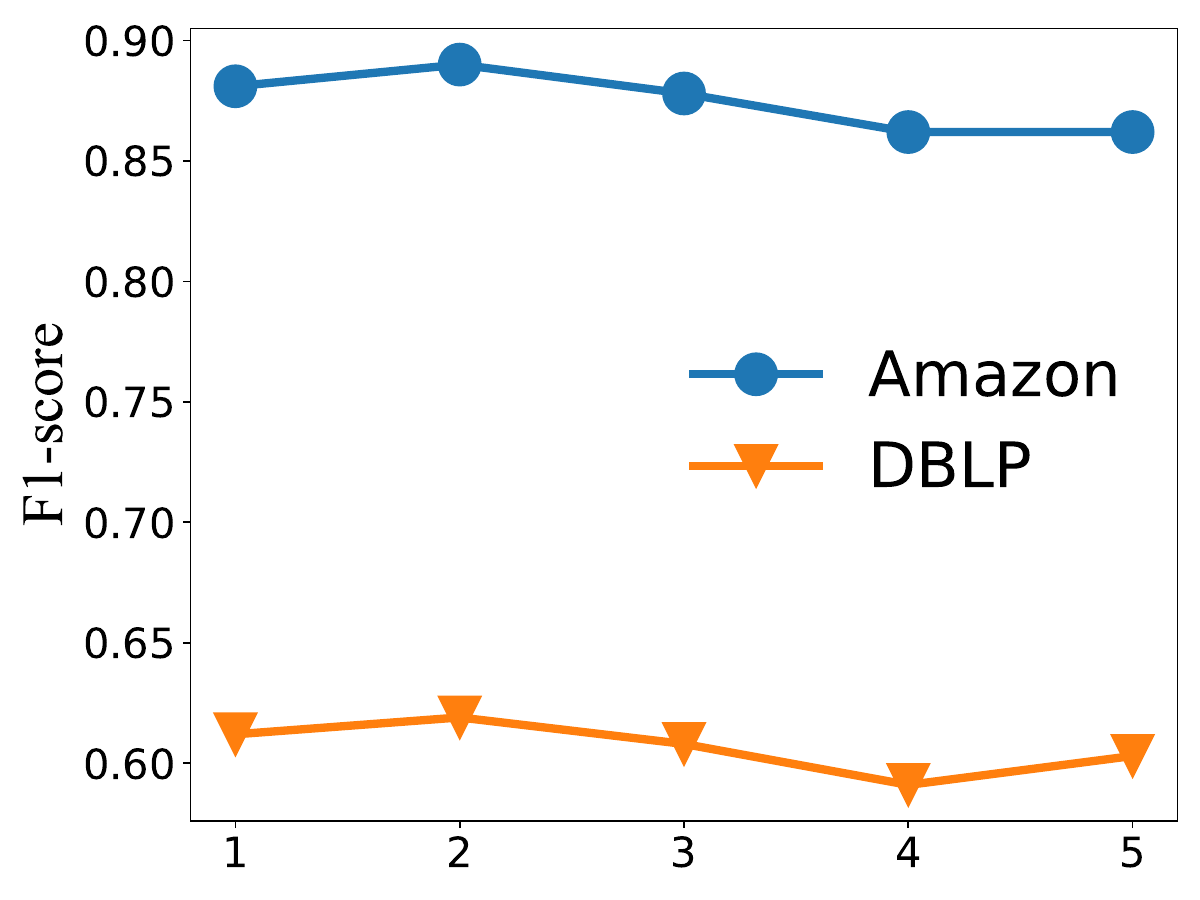}
\label{fig:count(b)}}\vspace*{-0.3cm}
\caption{The performance of \emph{SCCS} with varying the parameter $count$.}
\label{fig:count} \vspace{-0.5cm}
\end{figure}

\stitle{Exp-5: Ablation study  of the maximal clique.} In theory, we need to designate each maximal clique in $initialCom$ as an initial community, then apply expansion and contraction paradigms, and finally select the community with the minimum \textit{conductance} (or maximum \textit{F1-score}) as the target community. The advantage of this approach is that it allows us to consider all possible communities containing the query vertex $q$. However, experimental findings suggest that it incurs significant time overhead. Specifically, Figure \ref{fig:clique} compares the \textit{F1-score} and running time of \textit{SCCS} under two initial community schemes: including all maximal cliques in $initialCom$ (\textit{SCCS1}) and selecting only one maximal clique from $initialCom$ (\textit{SCCS2}). Compared to the latter, the former only slightly improves the F1-score, while the running time increases significantly. Therefore, to balance \textit{F1-score} and computational cost, we choose to return only one maximal clique as the initial community, which is consistent with our theoretical analysis in Section \ref{sec:LCS}.

\begin{figure}[t!]
\centering
\subfigure[\textit{F1-score}]{
\includegraphics[width=0.22\textwidth]{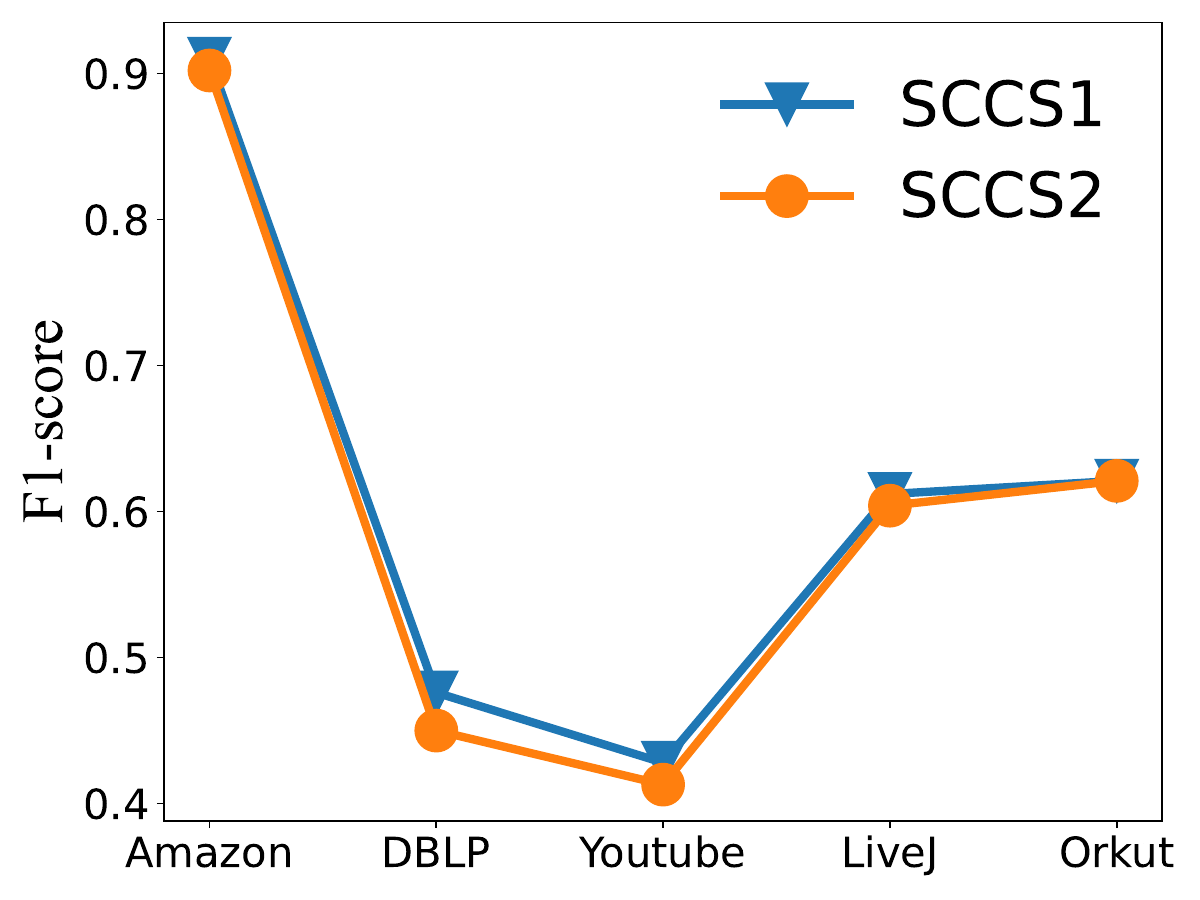}
\label{fig:clique(a)}}
\subfigure[\textit{Runtime}]{
\includegraphics[width=0.22\textwidth]{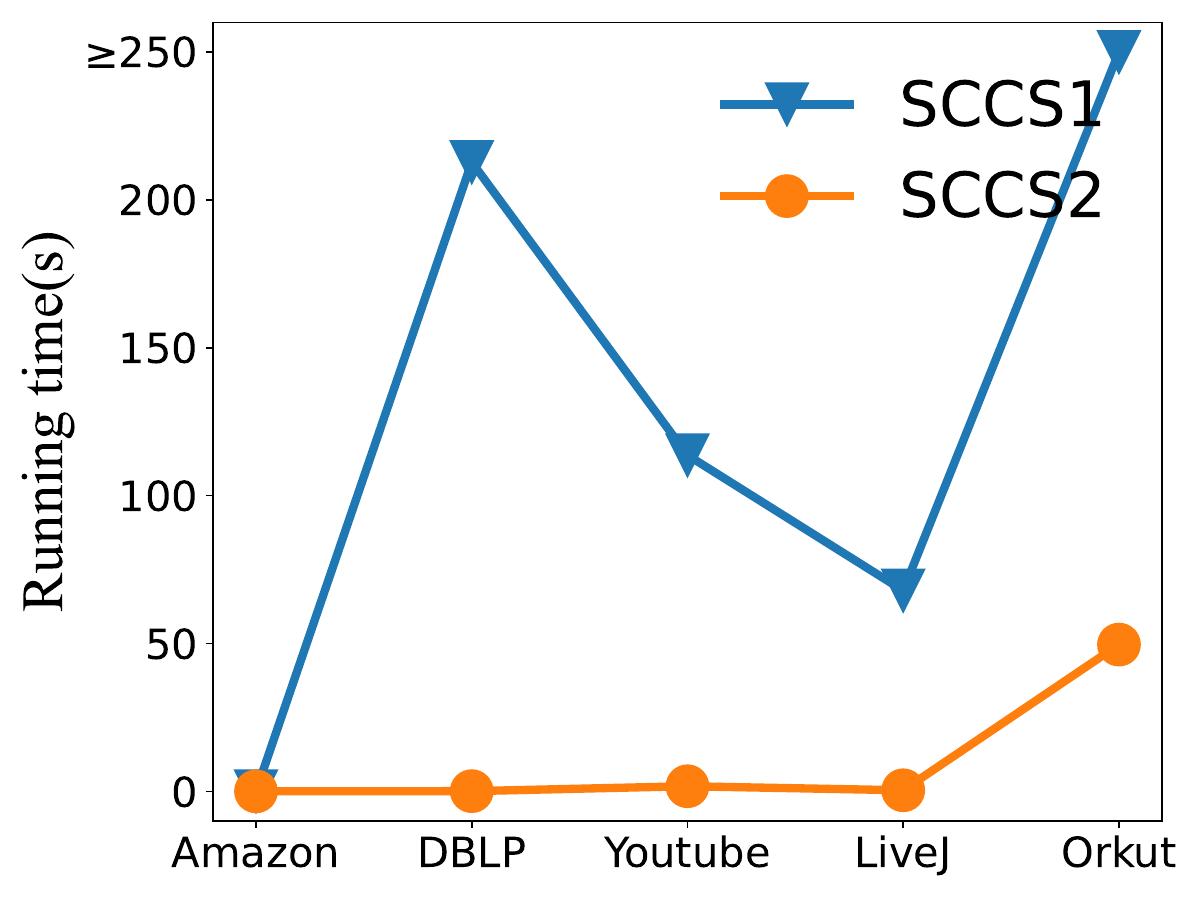}
\label{fig:clique(b)}
}\vspace*{-0.3cm}
\caption{Ablation study  of the maximal clique.}
\label{fig:clique} \vspace{-0.5cm}
\end{figure}

\begin{figure*}[t!]
\centering
\subfigure[\textit{PPRCS}]{
\includegraphics[width=0.22\textwidth]{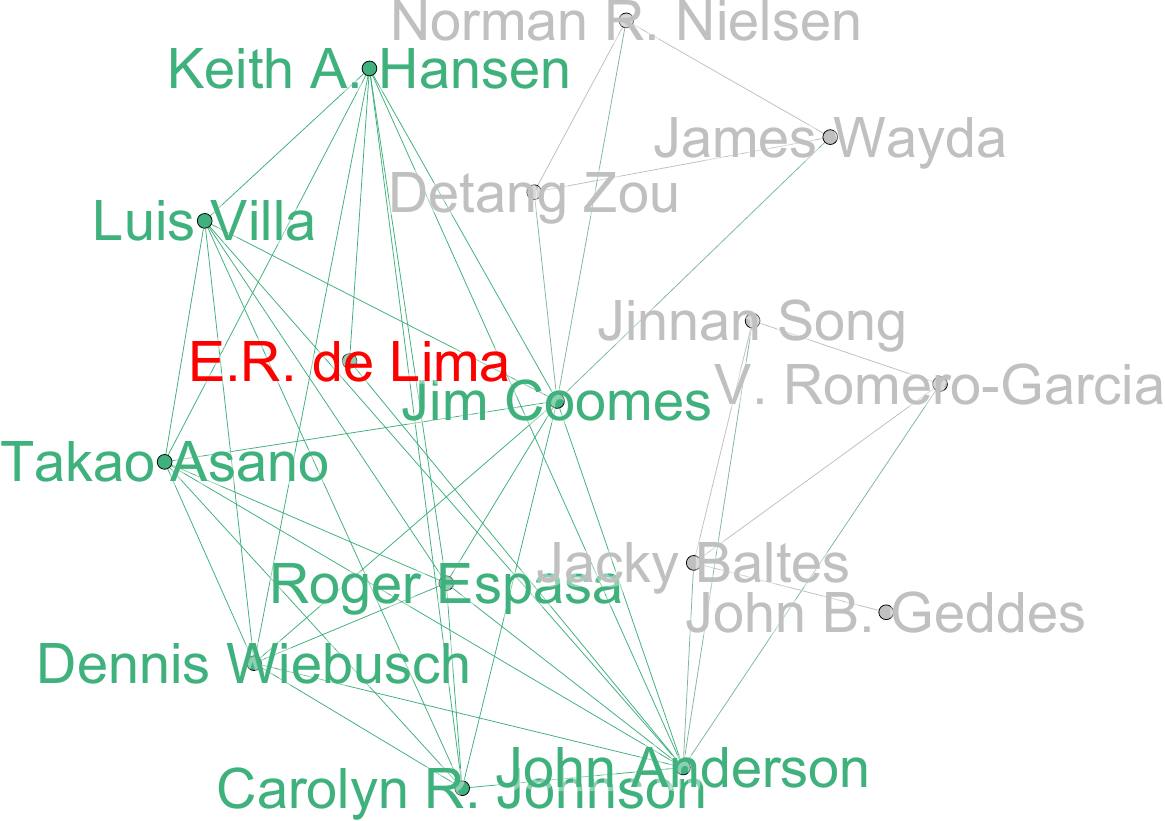}
}
\subfigure[\textit{SCCS}]{
\includegraphics[width=0.22\textwidth]{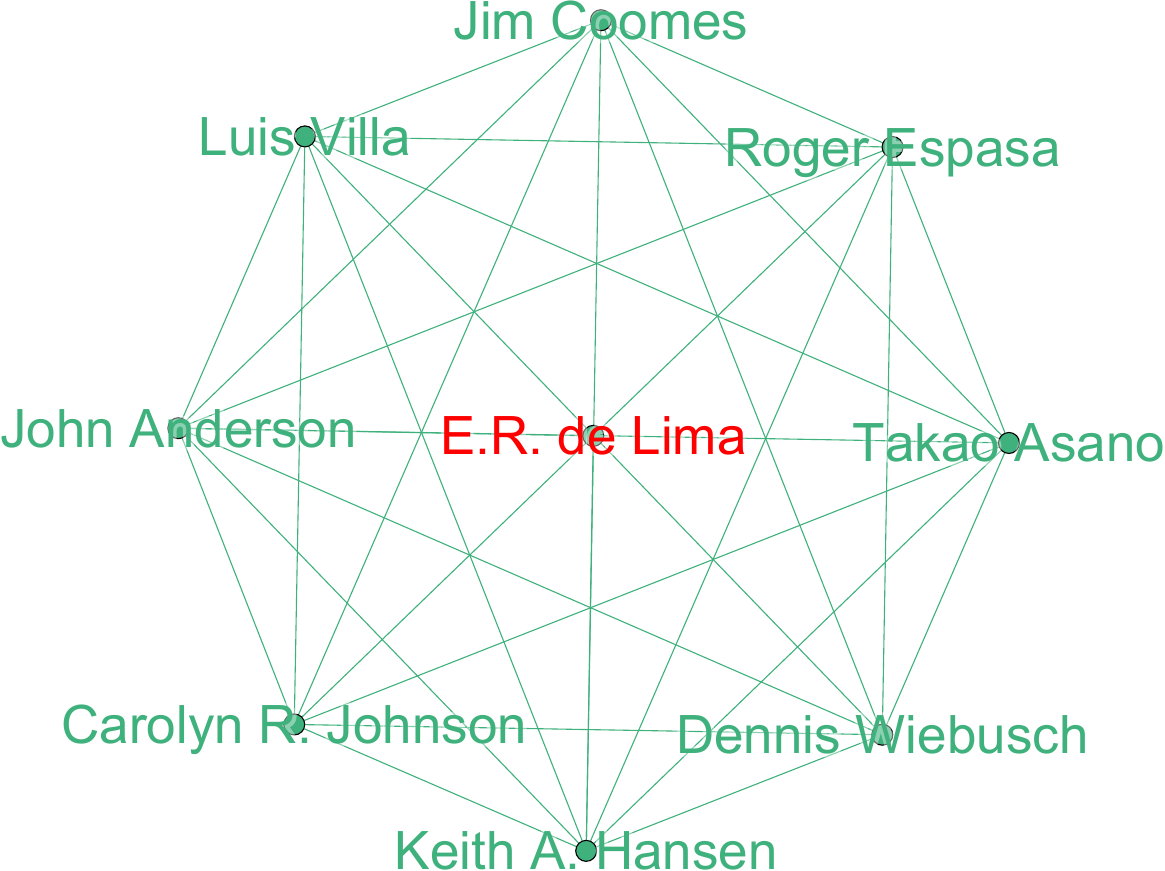}
}
\subfigure[\textit{PPRCS}]{
\includegraphics[width=0.22\textwidth]{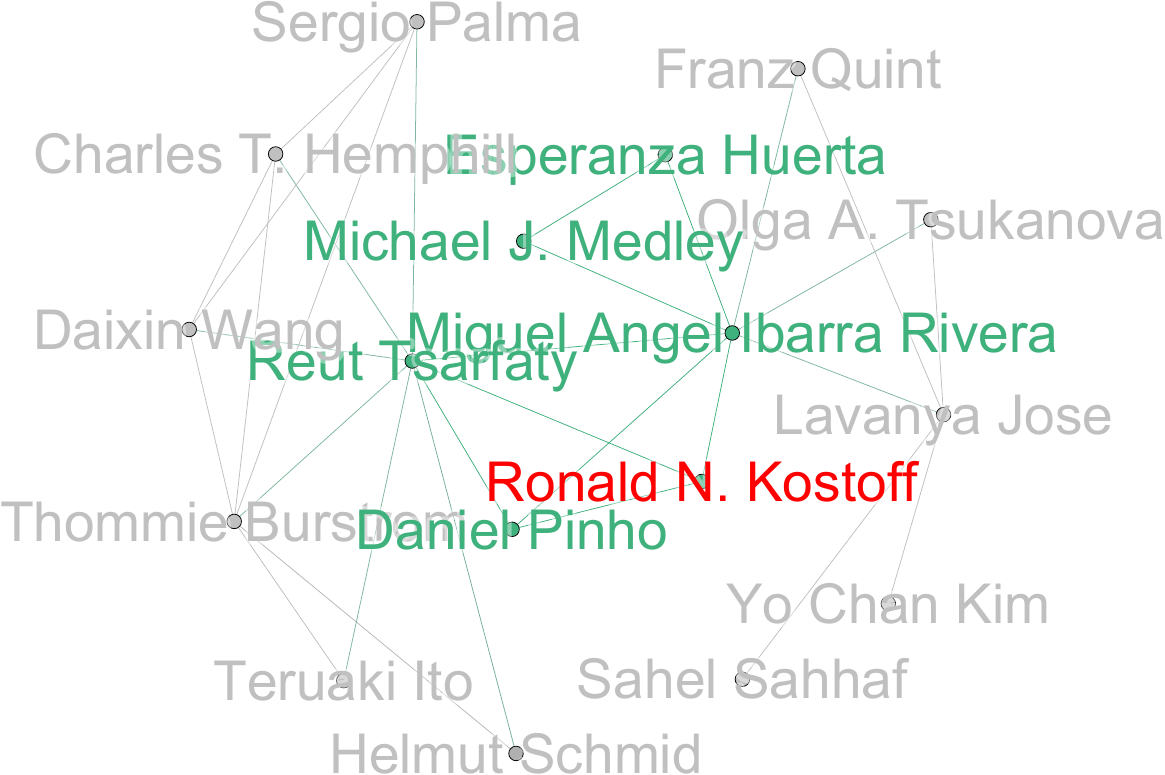}
}
\subfigure[\textit{SCCS}]{
\includegraphics[width=0.22\textwidth]{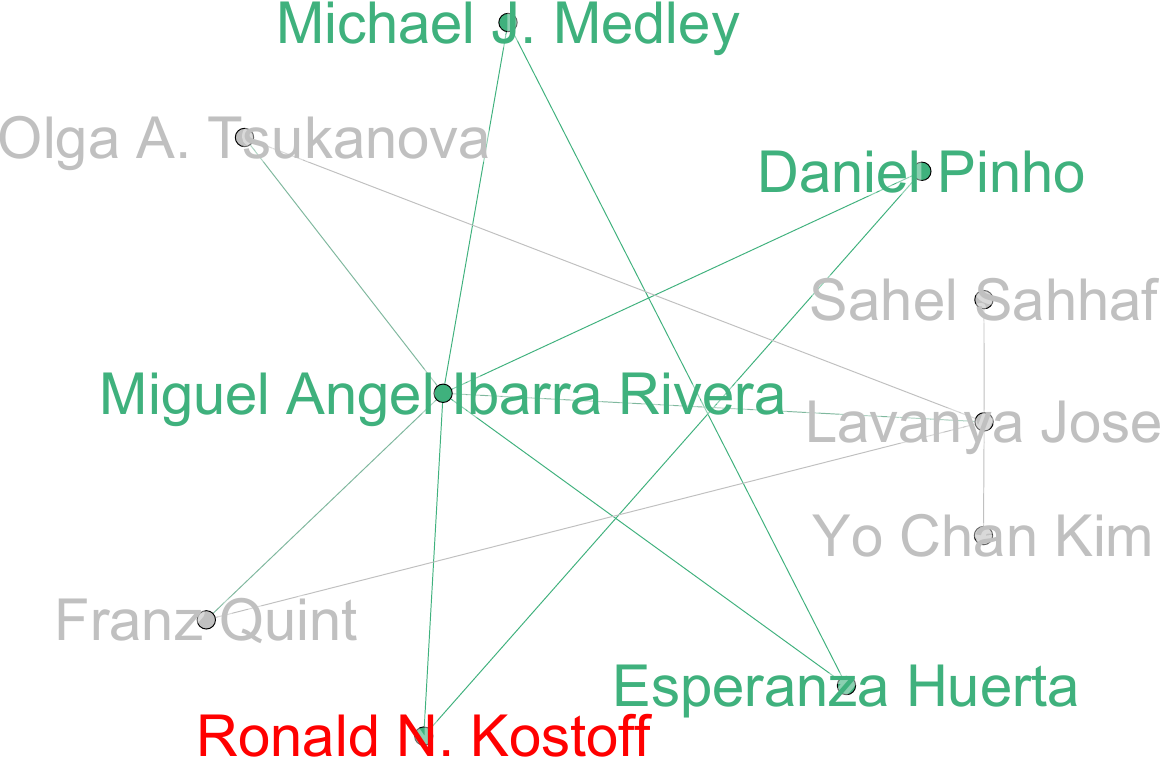}
}
\caption{Case studies on DBLP.}
\label{fig:case} 
\end{figure*}

\stitle{Exp-6: Case studies on DBLP.} For the case study, we conduct queries using different vertices based on \textit{PPRCS} and \textit{SCCS} on DBLP. Figures \ref{fig:case} (a-b) show the collaboration networks about the author "$E. R. de Lima$" found by \textit{PPRCS} and \textit{SCCS}, respectively. Obviously, (1) the community size of \textit{PPRCS} is larger than that of \textit{SCCS}, which does not conform to the small-scale properties of real communities. (2) The proportion of real community members in \textit{PPRCS} is 9/16, while the proportion of real community members in \textit{SCCS} is as high as 1, where the green vertices in the network represent members of the real community. From this, the \textit{F1-score} of \textit{PPRCS} and \textit{SCCS} are calculated to be 0.72 and 1 respectively. 
 Figures \ref{fig:case} (c-d) return the collaboration networks containing the author "$Ronald N. Kostoff$" with a similar trend. In particular, we observe that (1) there are three weakly connected partitions in \textit{PPRCS}, indicating that \textit{PPRCS} is more likely to contain three communities. (2) The proportion of real community members in \textit{PPRCS} is 6/17, while the proportion of real community members in \textit{SCCS} is 5/10. Considering both \textit{precision} and \textit{recall}, the \textit{F1-score} of \textit{PPRCS} and \textit{SCCS} are 0.44 and 0.5 respectively. The reason for the above  phenomena is that as the predicted community size increases, the \textit{recall} increases, but the \textit{precision} may decrease, resulting in a decrease in the \textit{F1-score}. Thus, \textit{SCCS} outperforms \textit{PPRCS} in discovering communities that closely match real-world scenarios.

\vspace{-0.2cm}
\subsection{Empirical Results on Synthetic Graphs} 
\label{sec:Scalability}

The synthetic \emph{LFR} \cite{lancichinetti2009detecting} benchmark datasets have three parameters: $n$ (the number of vertices), $ad$ (the average degree), and $\mu$ (community mixing ratio). In particular, A larger $\mu$ implies that the number of edges
crossing different communities increases, resulting in that being more difficult to detect intrinsic communities.

\stitle{Exp-7: Performance on LFRs with varying $n$.}  We fix the average degree $ad=18$ and the mixing ratio $\mu=0.1$, Figure \ref{fig:LFR_n} shows  the performance of \textit{SCCS} and \textit{PPRCS} on LFR networks as $n$ increased from 20,000 to 60,000. We observe the following phenomena: (1) As $n$ increases, the running time and \textit{F1-score} of \textit{SCCS} exhibit only slight fluctuations. (2) Although \textit{PPRCS} is more efficient than \textit{SCCS}, its \textit{F1-score} significantly decreases with the increase in $n$ and is inferior to that of \textit{SCCS}. The reason for the observed phenomena is that \textit{SCCS} trims the original graph into smaller subgraphs through sampling and then conducts a community search on these sampled subgraphs, whereas PPRCS generates large sets based on the graph diffusion. In summary, as the network size $n$ grows, both \textit{PPRCS} and \textit{SCCS} can obtain communities quickly. But, \textit{PPRCS} is less practical compared to \textit{SCCS} because the latter is more effective in discovering communities.

\begin{figure}[t!]
\centering	
\subfigure[\textit{F1-score}]{
\includegraphics[width=0.22\textwidth]{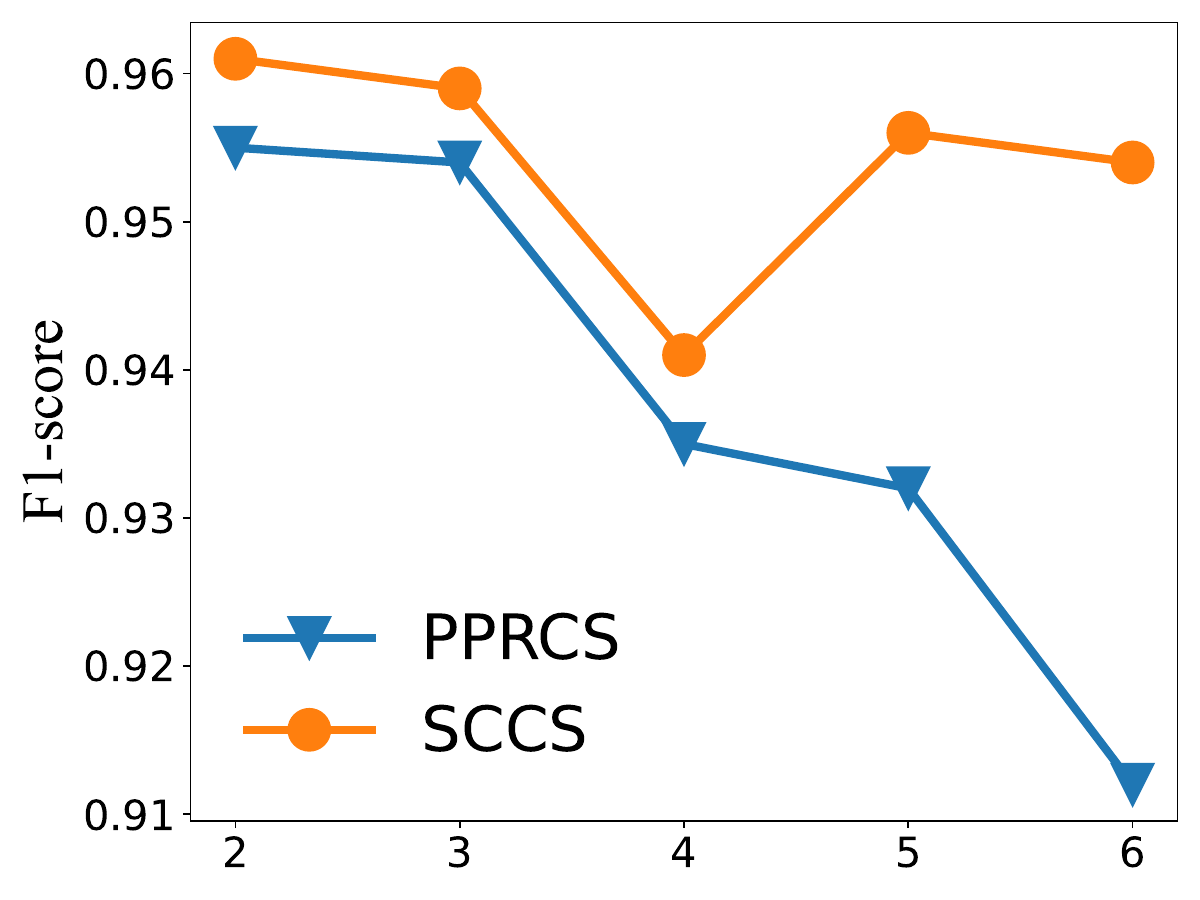}
\label{fig:LFR_n(a)}}
\subfigure[\textit{Runtime}]{
\includegraphics[width=0.22\textwidth]{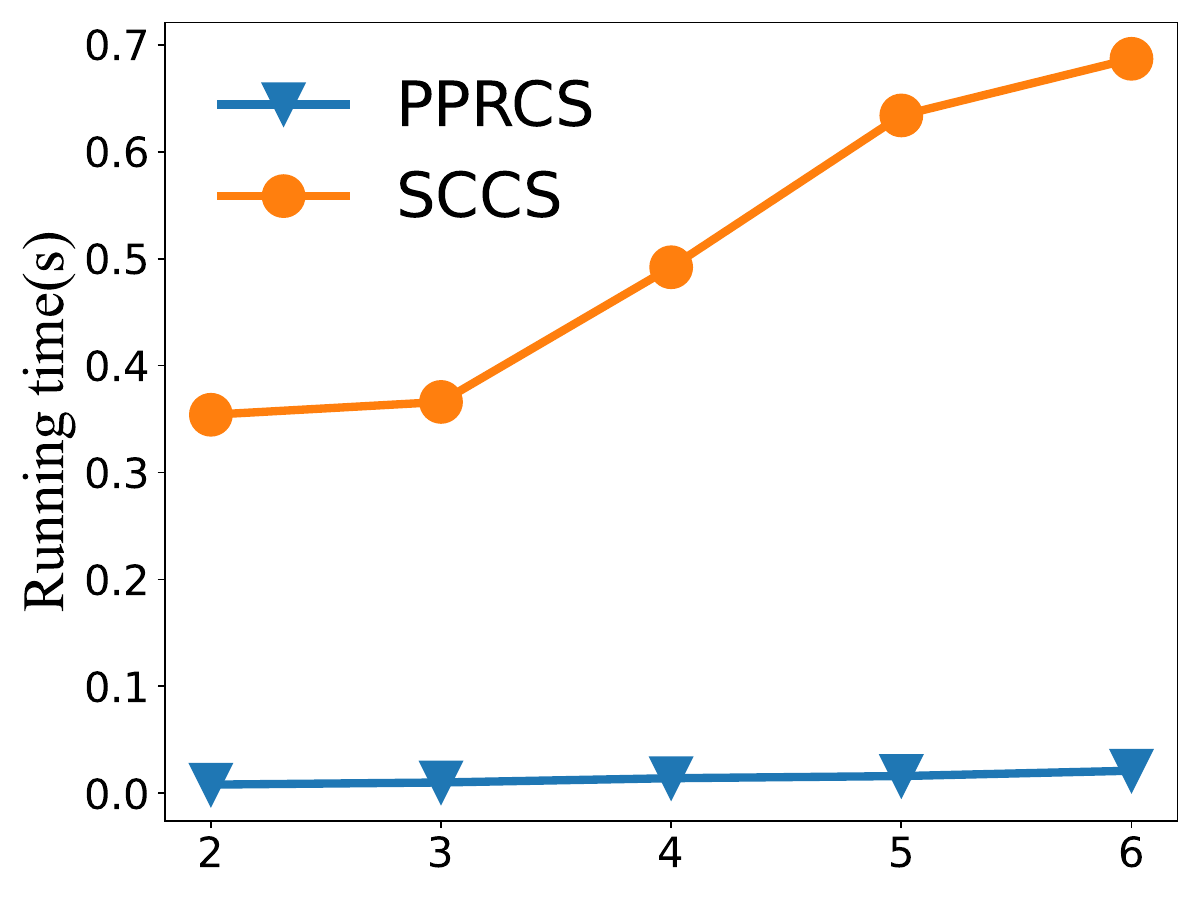}
\label{fig:LFR_n(b)}
}\vspace*{-0.3cm}
\caption{Performance on LFR with varying $n$ ($\times 10^4$).}
\label{fig:LFR_n} \vspace{-0.5cm}
\end{figure}

\stitle{Exp-8: Performance on LFRs with varying $ad$.} We fix $n = 30,000$ and $\mu = 0.2$, Figure \ref{fig:LFR_ad} reports the performance with $ad\in [8, 10 , 15, 20, 30]$. We have the following two aspects of observation: (1) The \textit{F1-score} improves with $ad$  increases and \textit{SCCS} consistently outperforms \textit{PPRCS}. It shows that on LFRs with higher $ad$, \textit{SCCS} is more effective in detecting community structures and performs more accurately in community partitioning. This phenomenon may be attributed to the denser connections between vertices on LFRs with higher average degrees, leading to stronger connections within communities, thereby facilitating the algorithm's ability to identify community boundaries and enhance the accuracy of community partitioning. (2) The running time of \textit{SCCS} and \textit{PPRCS} increases linearly with increasing $ad$. This may be due to denser connections between vertices in networks with higher average degrees, resulting in slower convergence of the algorithms as vertices consider more neighbors. Overall, \textit{SCCS} exhibits superior scalability compared to \textit{PPRCS} on LFRs with different average degrees.

\begin{figure}[t!]
\centering
\subfigure[\textit{F1-score}]{
\includegraphics[width=0.22\textwidth]{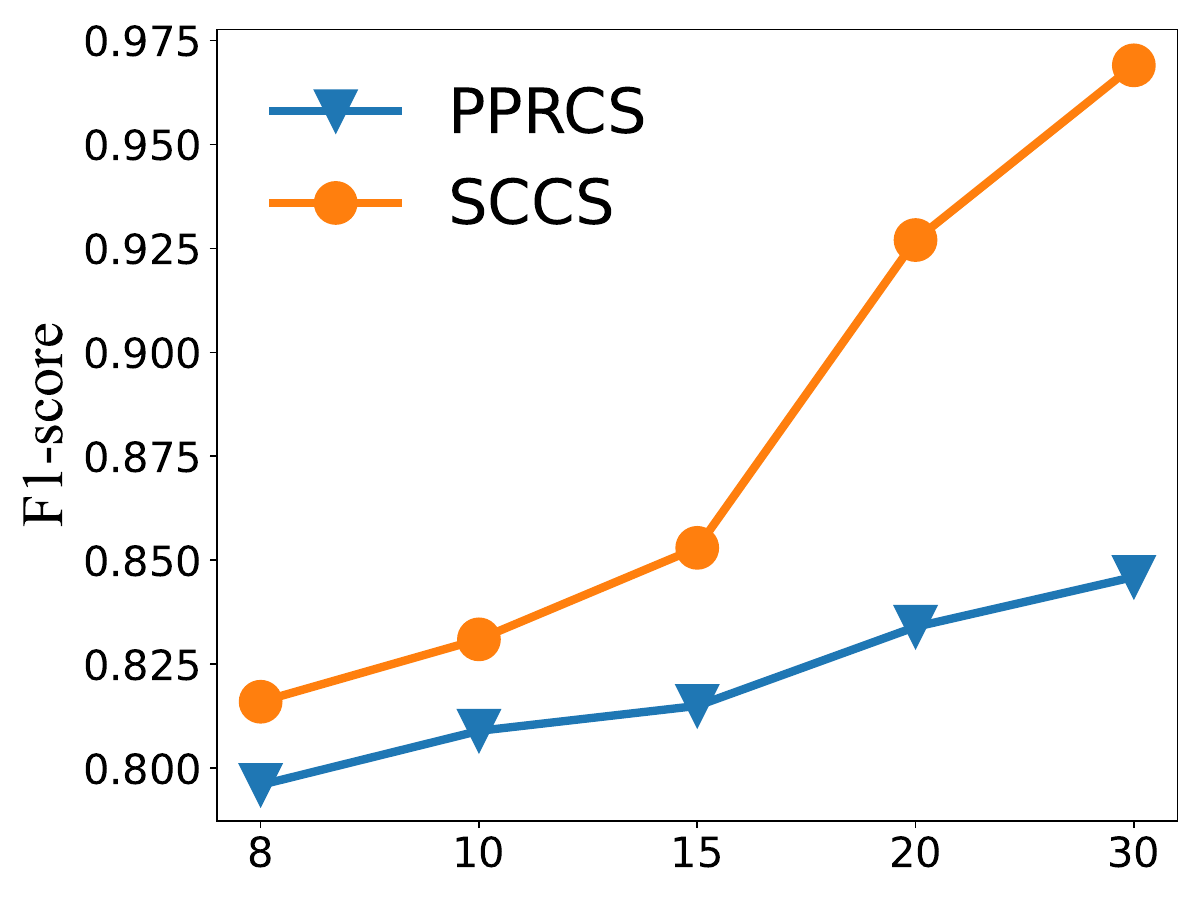}
\label{fig:LFR_ad(a)}}
\subfigure[\textit{Runtime}]{
\includegraphics[width=0.22\textwidth]{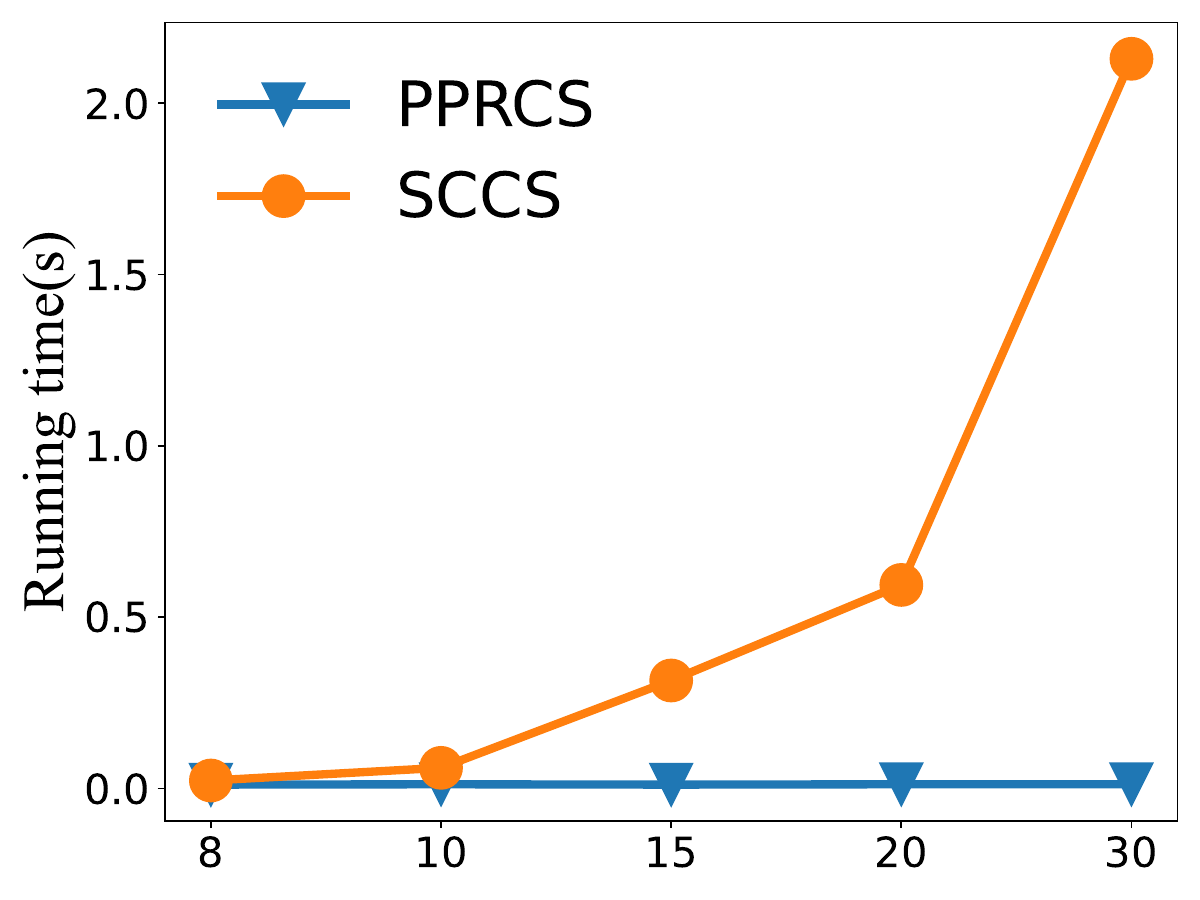}
\label{fig:LFR_ad(b)}
}\vspace*{-0.3cm}
\caption{Performance on LFR with varying  $ad$.}
\label{fig:LFR_ad} \vspace{-0.3cm}
\end{figure}

\stitle{Exp-9: Performance on LFRs with varying $\mu$.} As shown in Figure \ref{fig:LFR_mu}, we analyze LFR synthetic networks with  $n = 10,000$ and  $ad = 15$, varying $\mu$ from 0.1 with increments of 0.05 up to 0.3. The following conclusions can be drawn from Figure \ref{fig:LFR_mu}: (1) As $\mu$ increases, both the \textit{F1-score} of \textit{SCCS} and \textit{PPRCS} decrease, and their running times increase. This is because a higher $\mu$ results in a more complex network topology, making the distinction between communities more ambiguous. Consequently, the algorithms find it more challenging to accurately identify community structures, leading to increased computational complexity. (2) The running time of both \textit{SCCS} and \textit{PPRCS} remains short, with \textit{SCCS} consistently outperforming \textit{PPRCS} in terms of \textit{F1-score}. For instance, when $\mu=0.3$, both \textit{SCCS} and \textit{PPRCS} can obtain the target community within 0.5 seconds. At this point, the target community identified by \textit{SCCS} still maintains a high \textit{F1-score} of approximately 0.80, while the \textit{F1-score} for the community discovered by \textit{PPRCS} is around 0.58. Summing up, \textit{SCCS} remains effective and superior to \textit{PPRCS} in discovering community structures when $\mu \leq 0.3$.

\begin{figure}[t!]
\centering
\subfigure[\textit{F1-score}]{
\includegraphics[width=0.22\textwidth]{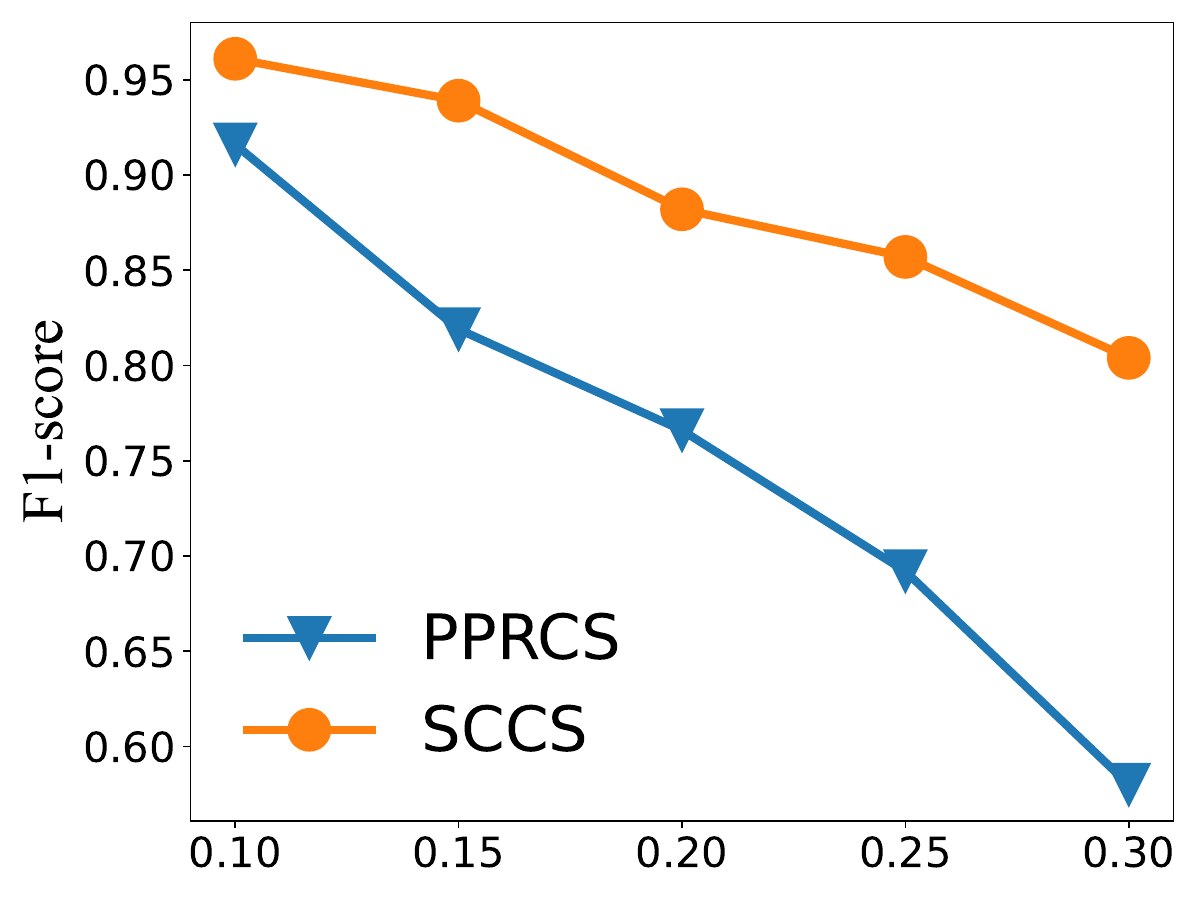}
\label{fig:LFR_mu(a)}
}
\subfigure[\textit{Runtime}]{
\includegraphics[width=0.22\textwidth]{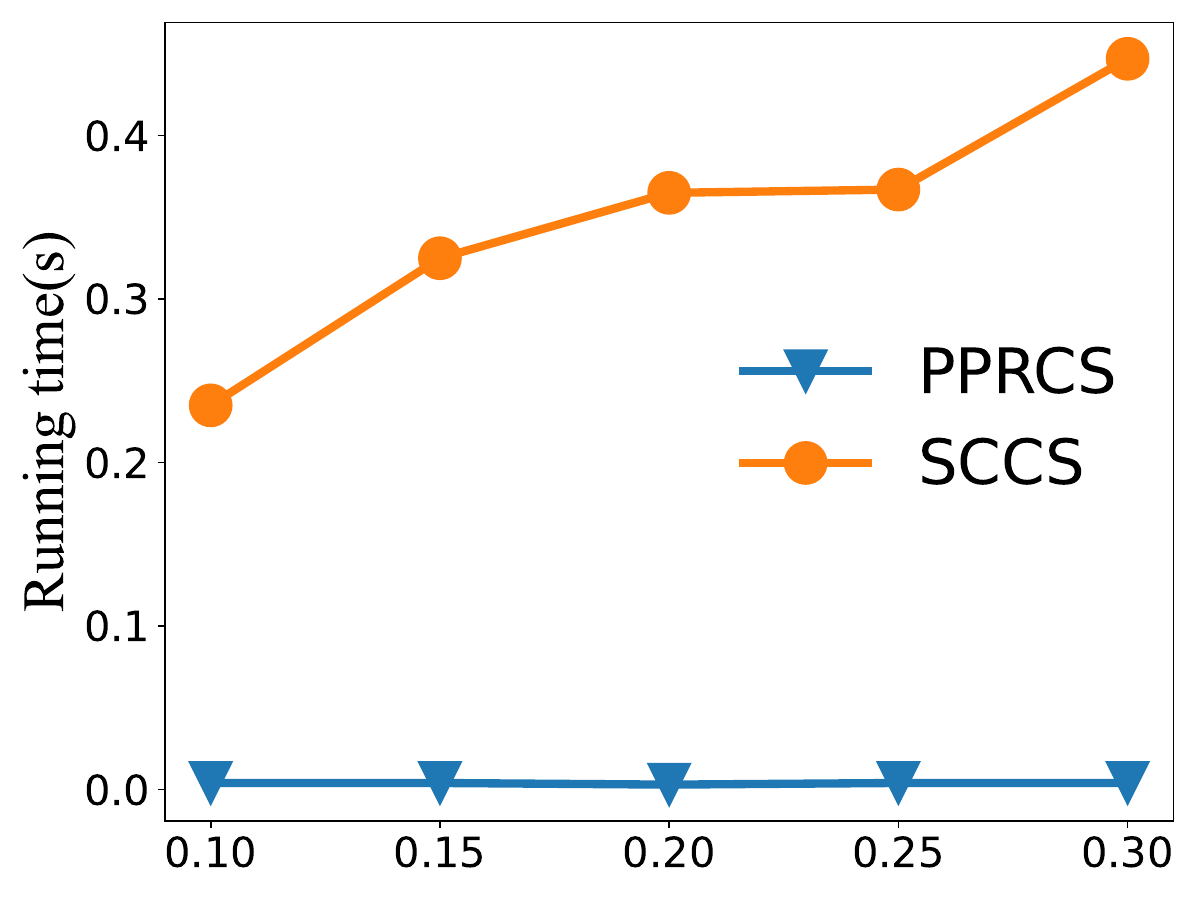}
\label{fig:LFR_mu(b)}
}\vspace*{-0.3cm}
\caption{Performance on LFR with varying $\mu$.}
\label{fig:LFR_mu} \vspace{-0.5cm}
\end{figure}

\vspace{-0.3cm}
\section{Related Work}\label{sec:relate}

Community search has emerged as a prominent tool for analyzing networks \cite{DBLP:conf/icde/HuangLX17, DBLP:journals/vldb/FangHQZZCL20}, which primarily concentrates on locating the communities containing the user-specified query vertex. We can broadly classify community search into two categories: cohesive subgraph-based community search and conductance-based community search. 

\stitle{Cohesive subgraph-based community search}, which primarily emphasizes the internal connectivity of the community while overlooking the separation between communities \cite{DBLP:conf/kdd/SozioG10,DBLP:conf/sigmod/HuangCQTY14,DBLP:conf/sigmod/CuiXWLW13,DBLP:journals/pvldb/WuJLZ15}. For instance, Sozio et al. \cite{DBLP:conf/kdd/SozioG10} proposed a community search framework, wherein the desired community is defined as a connected subgraph containing query vertices and exhibiting a high score according to a specified quality metric, specifically utilizing the $k$-core as the quality metric. Given that the $k$-core might not always exhibit the desired density, Huang et al. \cite{DBLP:conf/sigmod/HuangCQTY14} opted for a more cohesive subgraph representation, utilizing the $k$-truss model to characterize the community. Recently, Wu et al. \cite{DBLP:journals/pvldb/WuJLZ15} noted a common issue with the aforementioned methods, known as the free-rider effect, where the returned community frequently includes numerous vertices irrelevant to the query vertices. To address this, they introduced the concept of query-biased density \cite{DBLP:journals/pvldb/WuJLZ15}, aiming to identify the densest subgraph in proximity to the query vertices. However, these methods overlook the external sparsity of the community, leading to sub-optimal results.

\stitle{Conductance-based community search}, which emphasizes both the internal connectivity of the community and the separation between communities \cite{DBLP:conf/focs/AndersenCL06, DBLP:conf/kdd/TongF06, DBLP:conf/icdm/BianYCWLZ18, DBLP:conf/sigmod/YangXWBZL19}. For example, Tong et al. \cite{DBLP:conf/kdd/TongF06} utilized random walk with restart to evaluate the goodness score of a vertex relative to the query vertices. Andersen et al. \cite{DBLP:conf/focs/AndersenCL06} employed Personalized PageRank to rank vertices and then applied a sweep-cut procedure to achieve the locally optimal \textit{conductance}.  Unfortunately, they suffer from several defects in terms of the community search problem (e.g., they cannot guarantee that the target community is connected \cite{DBLP:conf/icml/ZhuLM13} or contain the given query vertex \cite{DBLP:conf/webi/LuoWP06}). Recently, He et al. \cite{DBLP:journals/eswa/HeLYLJW24} combined self-defined vertex scores with perturbation strategies to avoid local optima, thereby obtaining communities with the minimal \textit{conductance}. Despite achieving promising results, it is only applicable to community searches with bounded size. Besides simple graphs, more informative attribute information assigned to vertices and edges has been explored for community search. For example, the community search on keyword-based graphs \cite{DBLP:journals/pvldb/FangCLH16, DBLP:journals/pvldb/HuangL17, DBLP:conf/icde/LiuZZHXG20}, multi-valued graphs \cite{DBLP:conf/sigmod/LiQYYXXZ18}, heterogeneous information networks \cite{DBLP:journals/pvldb/FangYZLC20, DBLP:journals/pvldb/JianWC20}, and temporal networks \cite{DBLP:journals/pvldb/LinYLZQJJ24, DBLP:conf/dasfaa/ZhangLYJ22}. Obviously, these methods  are orthogonal to our work.

\section{CONCLUSION}\label{sec:conclude}

In this work, we introduce a new problem called \textit{conductance}-based community search (\emph{CCS}), aiming to find the subgraph with the minimum \textit{conductance} among all connected subgraphs containing a given query vertex. We establish the NP-hardness of \emph{CCS}. To address this, we first propose a basic feasible algorithm \emph{PPRCS} with several disadvantages. Then, we propose an advanced four-stage \textit{subgraph-conductance}-based algorithm \emph{SCCS}. Finally, extensive experiments on six real-world datasets  containing one billion-scale graph and synthetic graphs show that our solutions are superior and competitive with the existing six competitors. 

While our algorithms demonstrate strong performance in both time efficiency and solution quality, they remain heuristic in nature. Given that \textit{CCS} is proven to be NP-hard, future work will focus on exploring approximation ratios and worst-case performance bounds. Furthermore, we intend to implement a parallel approach to enhance the scalability of our algorithms for larger networks, thereby facilitating more effective applications in real-world scenarios. 

\section{ACKNOWLEDGMENTS}
The work was supported by (1) the Natural Science Foundation of China (62402399, 72374173), (2) Fundamental Research Funds for the Central Universities (No. SWU-XDJH202303),
 (3) Chongqing Innovative Research Groups (No. CXQT21005),
(4) the High Performance Computing clusters at Southwest University.

\bibliographystyle{IEEEtran}
\bibliography{sample}
\vspace{-10mm}
\begin{IEEEbiography}[{\includegraphics[width=1in,height=1.25in,clip,keepaspectratio]{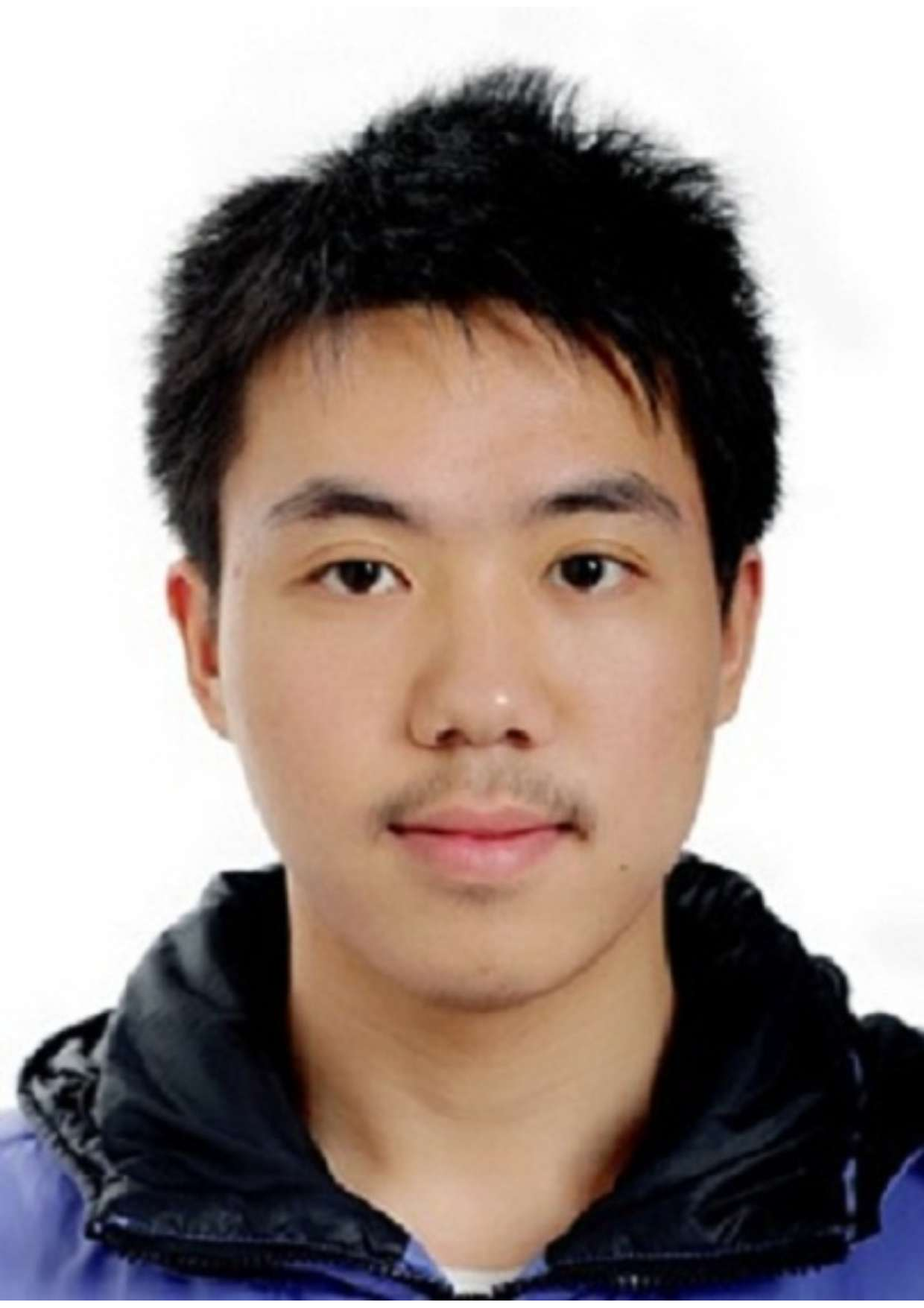}}]{Longlong Lin} received his Ph.D. degree from Huazhong University of Science and Technology (HUST), Wuhan, in 2022. He is currently an associate professor in the College of
	Computer and Information Science, Southwest University, Chongqing. His current research interests include graph clustering and graph-based machine learning.
\end{IEEEbiography}
\vspace{-10mm}

\begin{IEEEbiography}[{\includegraphics[width=1in,height=1.25in,clip,keepaspectratio]{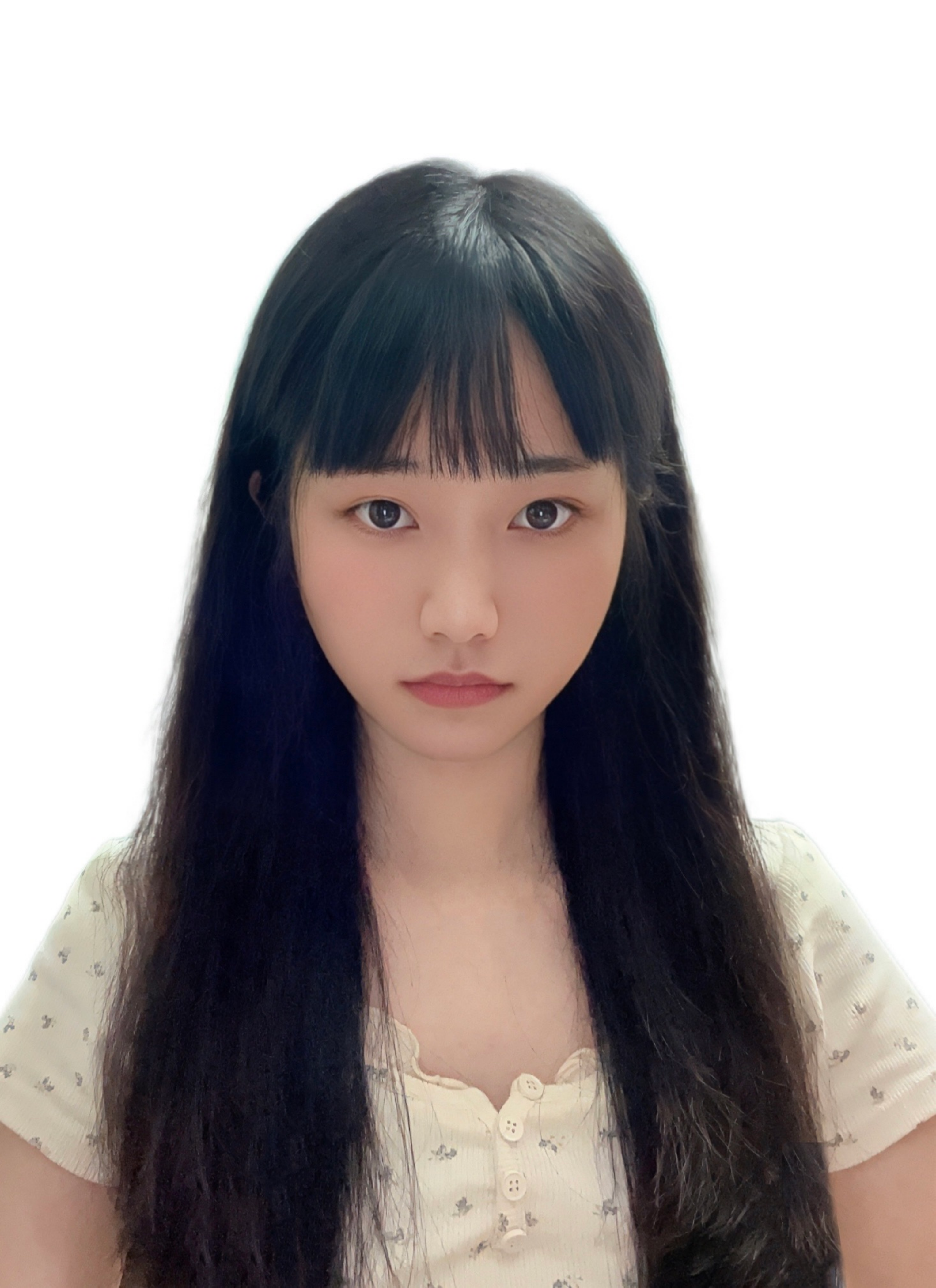}}]{Yue He} received her B.E. degree from Chengdu Normal University, China, in 2022. She is currently pursuing an M.S. degree at the College of Computer and Information Science,  Southwest University, China.  Her current research interests include community search and data mining.
\end{IEEEbiography}
\vspace{-10mm}

\begin{IEEEbiography}[{\includegraphics[width=1in,height=1.25in,clip,keepaspectratio]{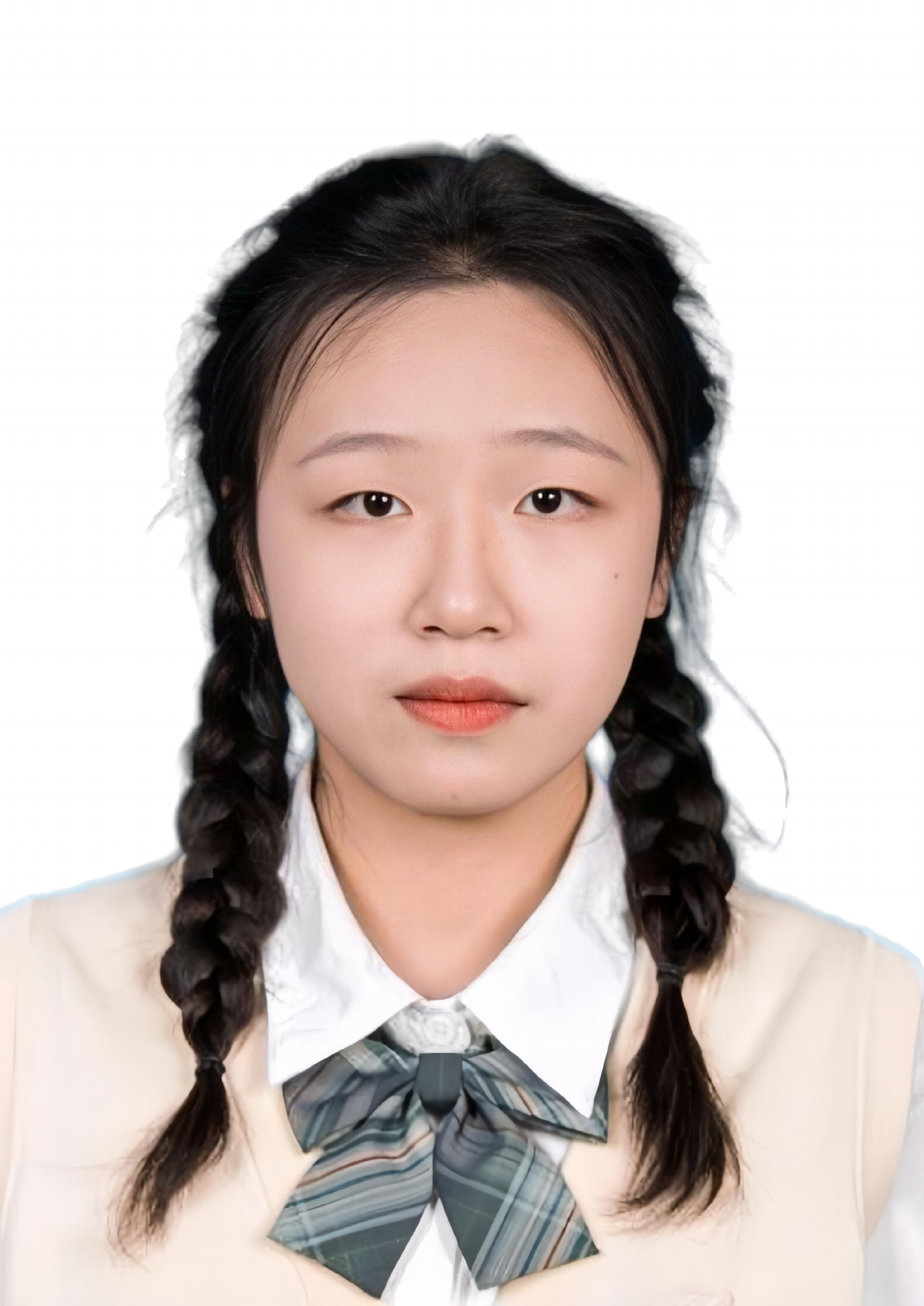}}]{Wei Chen} received her B.E. degree from Sichuan Agricultural University, China, in 2024. She is currently pursuing an M.S. degree at the College of
Computer and Information Science, Southwest University. Her current research interests include dense subgraph mining and community search.
\end{IEEEbiography}
\vspace{-10mm}

\begin{IEEEbiography}[{\includegraphics[width=1in,height=1.25in,clip,keepaspectratio]{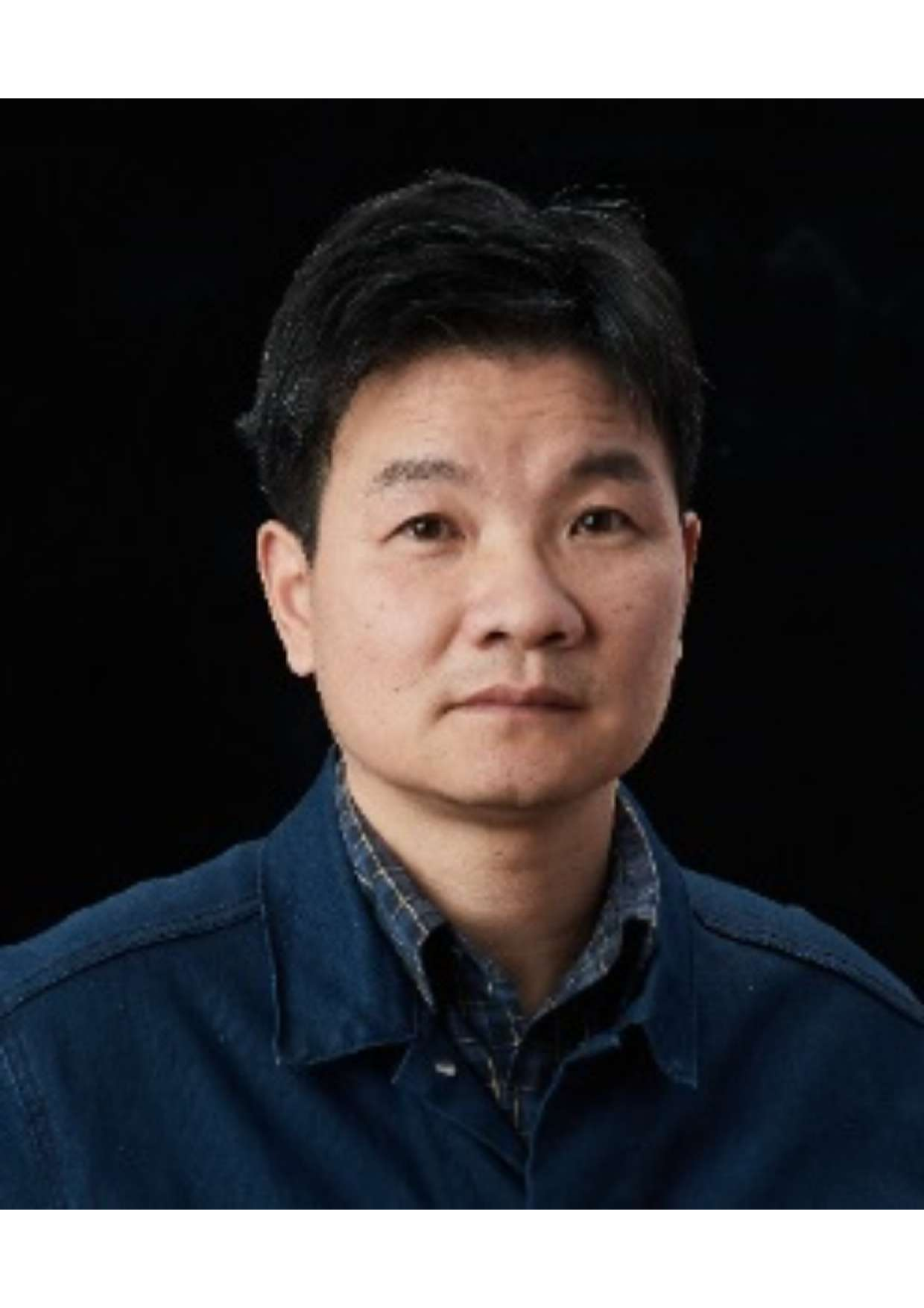}}]{Pingpeng Yuan}
	received his Ph.D. degree in computer science from Zhejiang University, Hangzhou, in 2002. He is now a professor in the School of Computer Science and Technology at Huazhong University of Science
	and Technology (HUST), Wuhan. His research interests include databases, knowledge representation and reasoning, and natural language processing, with a focus on high performance computing. He is the principle developer in multiple system prototypes, including TripleBit and PathGraph.
\end{IEEEbiography}
\vspace{-10mm}

\begin{IEEEbiography}[{\includegraphics[width=1in,height=1.25in,clip,keepaspectratio]{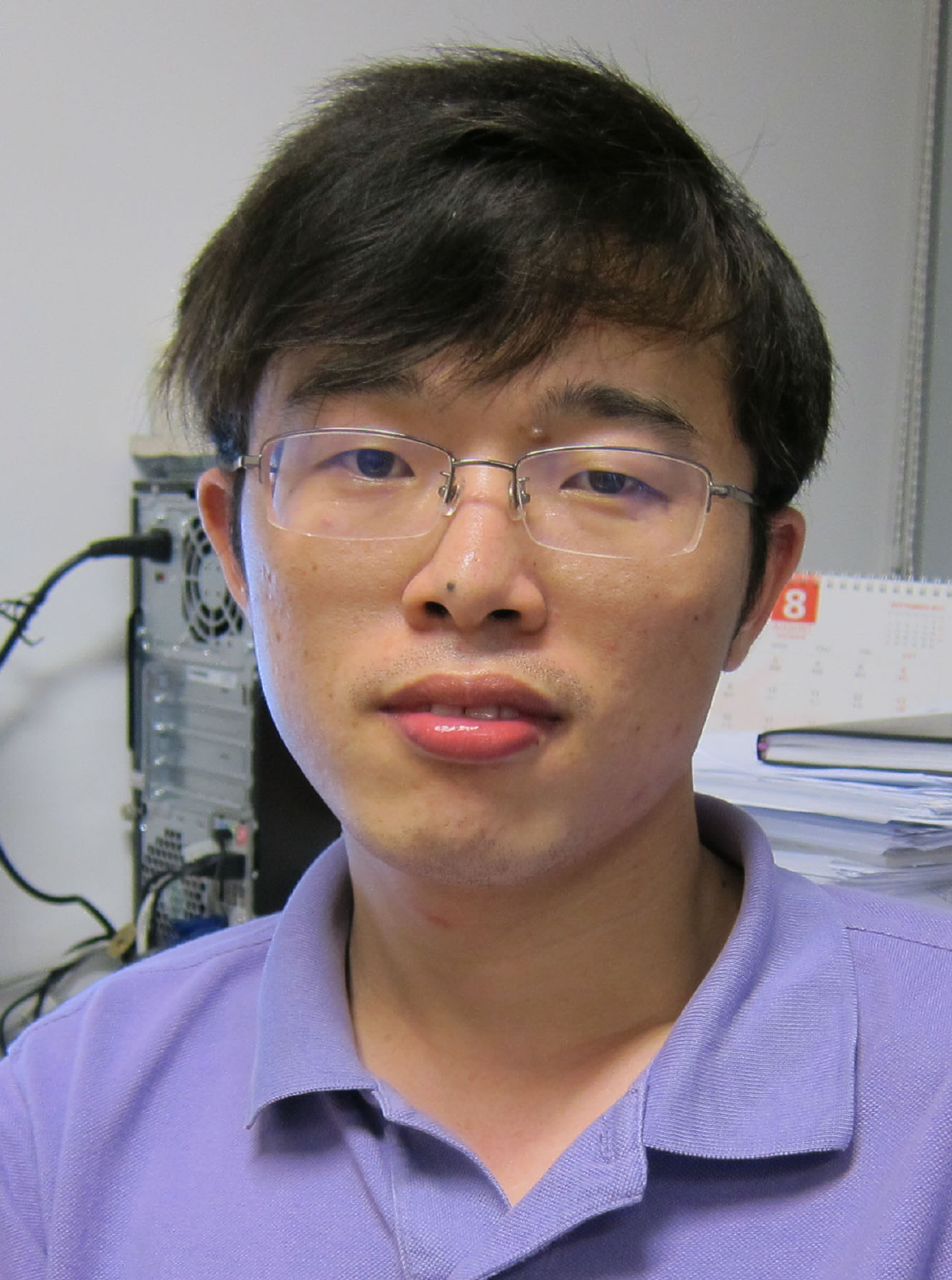}}]{Rong-Hua Li}
	received the PhD degree from the Chinese University of Hong Kong, in 2013. He is currently a professor with the Beijing Institute of Technology (BIT), Beijing, China. Before joining BIT in 2018, he was an assistant professor with Shenzhen University. His research interests include graph data management and mining, social network analysis, graph computation systems, and graph-based machine learning.
\end{IEEEbiography}
\vspace{-10mm}

\begin{IEEEbiography}[{\includegraphics[width=1in,height=1.25in,clip,keepaspectratio]{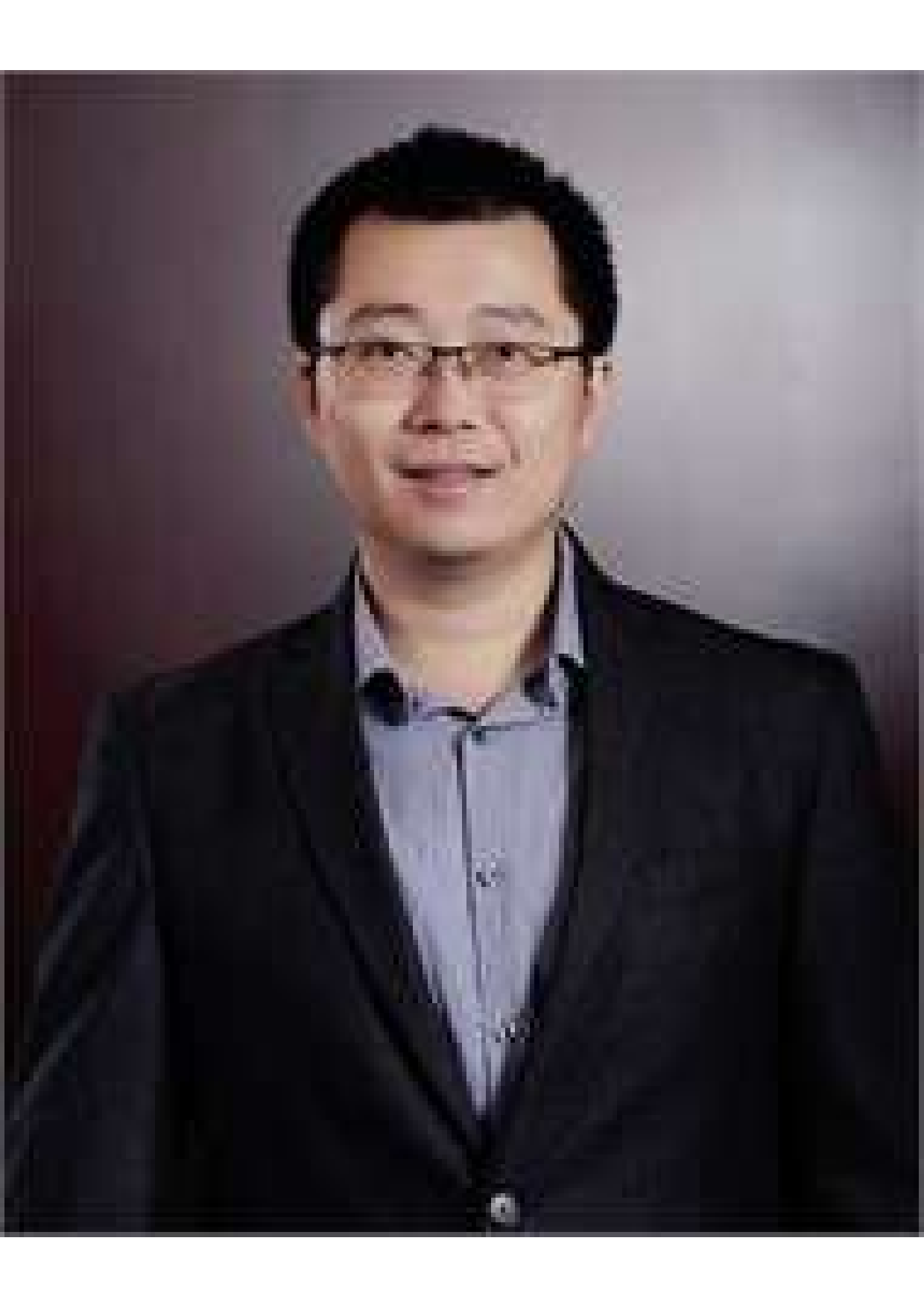}}]{Tao Jia} received the BSc degree from Nanjing University, China. He received his MSc and PhD degree from Virginia Tech, USA. He is currently a Professor at Southwest University, China. His research interest includes graph mining, brain networks, and social computing.
\end{IEEEbiography}
\vspace{-10mm}

\end{document}